\documentclass[a4paper,UKenglish,cleveref, autoref, thm-restate]{lipics-v2021}
\pdfoutput=1 %uncomment to ensure pdflatex processing (mandatatory e.g. to submit to arXiv)
\hideLIPIcs  %uncomment to remove references to LIPIcs series (logo, DOI, ...), e.g. when preparing a pre-final version to be uploaded to arXiv or another public repository

\usepackage{xspace}
\usepackage{booktabs}
\usepackage[ruled,vlined,linesnumbered]{algorithm2e}

\usepackage{tikz}
\usetikzlibrary{positioning, calc}

\usepackage{subcaption}
\usepackage{complexity}

\newcommand{\sentence}{\Gamma}
\newcommand{\formula}{\varphi}

\newcommand{\structure}{\mathcal{A}}

\newcommand{\fotwo}{\ensuremath{\mathbf{FO}^2}\xspace}

\newcommand{\ctwo}{\ensuremath{\mathbf{C}^2}\xspace}

\newcommand{\domain}{\Delta}

\newcommand{\vecn}{\mathbf{n}}

\newcommand{\utypes}{\ensuremath{\mathcal{U}}\xspace}
\newcommand{\btypes}{\ensuremath{\mathcal{B}}\xspace}

\newtheorem*{problem}{Problem}

\title{On Knowledge Compilation For Two-Variable First-Order Logic} %TODO Please add

\author{Qiaolan Meng}{Beihang University, China}{mengql1606@buaa.edu.cn}{https://orcid.org/0009-0003-3413-5421}{}

\author{Juhua Pu}{Beihang University, China}{pujh@buaa.edu.cn}{https://orcid.org/0000-0003-3866-8703}{}

\author{Hongting Niu}{Beihang University, China}{niuhongting@buaa.edu.cn}{https://orcid.org/0000-0001-5005-2317}{}

\author{Yuyi Wang}{CRRC Zhuzhou Institute, China}{yuyiwang920@gmail.com}{https://orcid.org/0000-0001-7273-9873}{}

\author{Yuanhong Wang}{Jilin University, China}{lucienwang@jlu.edu.cn}{https://orcid.org/0000-0003-1801-9634}{}

\author{Ondřej Kuželka}{Czech Technical University in Prague, Czech Republic}{ondrej.kuzelka@fel.cvut.cz}{https://orcid.org/0000-0002-6523-9114}{}

\authorrunning{Q. Meng et al.} %TODO mandatory. First: Use abbreviated first/middle names. Second (only in severe cases): Use first author plus 'et al.'

\Copyright{Qiaolan Meng, Juhua Pu, Hongting Niu, Yuyi Wang, Yuanhong Wang, and Ondřej Kuželka} %TODO mandatory, please use full first names. LIPIcs license is "CC-BY";  http://creativecommons.org/licenses/by/3.0/

\ccsdesc[100]{Theory of computation~Automated reasoning}
\keywords{Knowledge Compilation, First-order Logic, SAT} %TODO mandatory; please add comma-separated list of keywords

\category{} %optional, e.g. invited paper

\acknowledgements{We thank the anonymous reviewers for their helpful comments and suggestions.}%optional

\nolinenumbers %uncomment to disable line numbering

\EventEditors{Alexey Ignatiev and Stefan Szeider}
\EventNoEds{2}
\EventLongTitle{29th International Conference on Theory and Applications of Satisfiability Testing (SAT 2026)}
\EventShortTitle{SAT 2026}
\EventAcronym{SAT}
\EventYear{2026}
\EventDate{July 20--23, 2026}
\EventLocation{Lisbon, Portugal}
\EventLogo{}
\SeriesVolume{377}
\ArticleNo{32}
\begin{document}

\maketitle

%TODO mandatory: add short abstract of the document
\begin{abstract}
Knowledge compilation transforms logical theories into circuit representations that support efficient reasoning. 
We study this problem for propositional groundings of \fotwo{}, the two-variable fragment of first-order logic over finite domains. 
Given an \fotwo{} sentence and a domain of size $n$, its grounding yields a propositional theory over ground atoms. 
We ask whether such theories admit compact representations in DNNF-based and related knowledge compilation languages, and whether these can be constructed efficiently, both with respect to the domain size $n$ for a fixed sentence.
We show first that compact compilation is impossible in general: there exists an \fotwo{} sentence whose grounding over a domain of size $n$ requires DNNF size $2^{\Omega(n)}$. 
On the positive side, we develop a two-stage compiler that exploits the symmetries inherent in the propositional groundings of \fotwo{} sentences.
It branches on unary and binary types rather than individual ground atoms, in a similar spirit to lifted inferences for probabilistic relational models.
Moreover, it optimizes the compilation process by efficiently identifying and caching residual subproblems that are equivalent with respect to future extensions.
Experiments show the practical efficiency of our approach, which often produces smaller circuits and compiles faster than straightforward grounding-based baselines.
\end{abstract}

\section{Introduction}
\label{introduction}

Knowledge compilation (KC) studies how to transform a logical theory into a target representation that supports efficient downstream reasoning. 
Prominent target languages include the decomposable negation normal form (DNNF), which admits tractable queries such as satisfiability and conditioning, and its deterministic variant d-DNNF, which additionally supports efficient model counting and related inference tasks~\cite{darwiche2002knowledge,darwiche2001decomposable}. 
% A central question in this area is whether a given family of theories admits compact representations in such languages.

In this paper, we study knowledge compilation for groundings of first-order theories, focusing on the two-variable fragment \fotwo over finite domains. 
Given a fixed \fotwo sentence $\sentence$ and a finite domain $\domain$, grounding $\sentence$ over $\domain$ yields a propositional theory whose variables correspond to ground atoms. We ask two basic questions: whether these grounded theories admit compact representations in DNNF-based languages, and whether such representations can be constructed efficiently, both with respect to the domain size $n$ for a fixed sentence.
% This is a natural compilation problem: the source description is first-order and highly symmetric, while the target is a standard propositional language for exact reasoning.

Answering these questions has immediate practical value.
For instance, d-DNNF admits linear-time conditioning~\cite{darwiche2002knowledge} and model sampling~\cite{gupta2019weighted, lai2022ccdd}, which naturally enables exact conditional sampling to complete partially observed relational data while strictly respecting universal and existential constraints. 
Furthermore, when the compiled circuit is organized as a structured d-DNNF with respect to a fixed variable tree, it can be composed with other structured circuits over the same variables, enabling modular integration with probabilistic or neurosymbolic architectures~\cite{ahmed2022semantic,AhmedAISTATS23,AhmedNeurIPS23}.
Particularly, neurosymbolic learning systems usually reason over complex relational domains, and their symbolic knowledge bases are inherently specified in first-order logic (FOL) rather than propositional logic. 
Targeting the knowledge compilation of the $\fotwo$ fragment, a well-studied subset of FOL with balanced expressivity and decidability~\cite{1997decision}, offers a vital path to hybrid logic and learning pipelines at scale.

% Answering these questions is worthwhile even though \fotwo{} admits efficient lifted algorithms for tasks such as symmetric weighted model counting~\cite{beame2015symmetric} and sampling~\cite{wang_exact_2023-1}.
% Lifted algorithms exploit the exchangeability present in the first-order specification and are especially effective when weights and evidence preserve this symmetry. 
% Compilation into propositional d-DNNF serves a complementary purpose. 
% Once the grounding of an \fotwo{} sentence has been compiled, the same circuit can be reused for weighted model counting with arbitrary weights on individual ground literals, rather than only predicate-level weights handled by symmetric weighted first-order model counting. 
% % This is important in applications where weights are obtained from heterogeneous data sources or learned predictors and may vary from one tuple to another.
% Furthermore, d-DNNF preserves efficient counting and sampling after conditioning~\cite{gupta2019weighted, lai2022ccdd}, and can handle arbitrary evidence on ground atoms, including evidence on binary literals, even when such evidence destroys the symmetries exploited by lifted inference. 
% Finally, when the compiled circuit is organized as a structured d-DNNF with respect to a fixed variable tree, it can be composed with other structured circuits over the same variables, enabling modular integration with probabilistic or neurosymbolic architectures~\cite{ahmed2022semantic,AhmedAISTATS23,AhmedNeurIPS23}.

Knowledge compilation for first-order theories differs from both lifted inference~~\cite{van_den_broeck_lifted_2011,beame2015symmetric,crane2} and propositional compilation~\cite{c2d,d4}.
Typically, lifted inference targets first-order theories and performs inference without grounding.
For instance, lifted inference for the full \fotwo{} fragment is tractable in the \emph{symmetric} setting,
% It can only apply to the \emph{symmetric} setting, 
where the weight of models is invariant under permutations of the domain elements, and it does \emph{not} support conditioning on binary evidence that breaks the symmetries~\cite{van_den_broeck_conditioning_2012-1}.
In contrast, propositional compilation targets propositional theories, where conditioning on arbitrary evidence is supported, but the input theory is already grounded, so the symmetry and structural regularity present in the first-order description are no longer explicit.
Though there are a few works on exploiting the implicit symmetries in propositional model counting~\cite{wang2022symmc,van2021symmetric}, how to leverage the symmetries in knowledge compilation remains largely unexplored.
Therefore, this paper lies between these two viewpoints: we start from a first-order specification, but aim to compile its propositional grounding into a standard propositional target language.

Our contributions are both negative and positive. 
On the negative side, we show that polynomial-size compilation is impossible in general: there exists an \fotwo sentence whose grounding over a domain of size $n$ requires DNNF size $2^{\Omega(n)}$. 
On the positive side, we develop a specialized two-stage compiler for \fotwo sentences. 
The compiler exploits the fact that \fotwo models over finite domains can be organized via finitely many unary and binary types, and therefore branches on types rather than on individual ground atoms. 
We further introduce an optimization based on equivalent residual subproblems, allowing the compiler to reuse compiled subcircuits when different contexts admit the same set of future completions. 
% Our experiments show that this approach is effective on several benchmark families and often produces substantially smaller circuits than straightforward grounding-based baselines.
Our experiments on several benchmark families suggest that this approach can produce smaller circuits than straightforward grounding-based baselines.

The paper is structured as follows.
Section~\ref{sec:prelim} recalls the necessary background. 
Section~\ref{sec:hardness} proves the exponential lower bound for \,\fotwo{} compilation. 
% Section~\ref{sec:types} introduces types and configurations, and Sections~\ref{sec:overview}--\ref{sec:compatibility} present the two-stage compilation and efficient extendability checks. 
Section~\ref{sec:compilation} presents the two-stage compilation procedure for \fotwo{}.
Section~\ref{sec:optimizations} develops the optimization techniques, and
Section~\ref{sec:experiments} reports an empirical evaluation.

\section{Related Work}
\label{related}
% Knowledge compilation trades succinctness for tractable queries, and the compilation map of~\cite{darwiche2002knowledge} is the standard reference for positioning target languages.
% DNNF and d-DNNF are central due to their balance between expressiveness and polynomial-time support for many queries~\cite{darwiche2001decomposable,darwiche2002knowledge}. Propositional compilation into (d-)DNNF has been extensively studied~\cite{darwiche1999compiling,c2d,pipatsrisawat2010top,d4,bella}, alongside strong lower bounds and separations~\cite{Bova2014ASE,de2020lower,pipatsrisawat2010lower}, often via communication-complexity techniques~\cite{bova2016knowledge}. 
% Moving beyond propositional theories, first-order knowledge compilation introduces lifted circuit representations (e.g., FO-d-DNNF) that support a more limited query set~\cite{van_den_broeck_lifted_2011,van_den_broeck_conditioning_2012-1}.
% In contrast, our work also targets first-order theories but outputs a fully propositional d-DNNF, thereby supporting the standard d-DNNF queries.

Knowledge compilation studies how to trade succinctness for tractable queries by compiling logical theories into target languages with polynomial-time support for selected operations. 
The compilation map in \cite{darwiche2002knowledge} remains the standard reference for positioning such languages. 
Among them, DNNF and d-DNNF are central because they strike a useful balance between succinctness and tractable querying~\cite{darwiche2001decomposable}.
% including satisfiability, model counting, conditioning, and related operations
Other languages include OBDDs, FBDDs, and SDDs~\cite{darwiche2011sdd}, which are more restrictive but support additional queries such as equivalence checking and validity checking in polynomial time~\cite{darwiche2002knowledge}.
On the propositional side, compilation into variants of DNNF has been studied extensively, both from the algorithmic perspective~\cite{darwiche1999compiling,c2d,pipatsrisawat2010top,d4,bella} and from the perspective of lower bounds and separations~\cite{Bova2014ASE,de2020lower,pipatsrisawat2010lower}, often using communication-complexity techniques~\cite{bova2016knowledge}. 

Moving beyond propositional theories, first-order knowledge compilation and lifted probabilistic inference introduce lifted circuit representations, such as FO-d-DNNF, that operate directly on first-order structure and support lifted inference~\cite{van_den_broeck_conditioning_2012-1, van_den_broeck_lifted_2011, crane2}. 
These approaches differ from ours both in the target representation and in the intended reasoning regime. 
Our goal is to compile an \fotwo{} sentence and a finite domain into a standard propositional d-DNNF circuit, thereby retaining the query support of propositional d-DNNF.

Our work is also related to recent results on algorithmic tasks over \fotwo{} models, such as symmetric weighted first-order model counting~\cite{beame2015symmetric}, lifted first-order model sampling~\cite{wang2024lifted}, and model enumeration for two-variable logic~\cite{meng2025model}. 
These works show that \fotwo{} admits unusually strong algorithmic structure for counting, sampling, and enumeration. 
Our focus is different: we study compilation into DNNF-based circuit languages, including both lower bounds on representation size and a specialized compilation procedure. 
At the same time, these adjacent lines of work help motivate our setting, since compiled d-DNNF circuits can serve as a basis for exact reasoning tasks such as counting and sampling on grounded \fotwo{} theories.
% counting, conditioning, enumeration, and conditional sampling

\section{Preliminaries}
\label{sec:prelim}

We recall basic notions from first-order logic and knowledge compilation, and fix notations used throughout the paper.

\subsection{Propositional and First-Order Logic}
\label{sec:prelim:fol}

We assume familiarity with standard notions from propositional and first-order logic~\cite{libkin2004elements}.
In this paper, we work on the \emph{function-free} \emph{finite-domain} fragment of first-order logic.
A vocabulary is a finite set $\mathcal{P}$ of predicate symbols. 
An \emph{atom} is of the form $P(t_1,\dots,t_k)$ where $P/k\in\mathcal{P}$ and each $t_i$ is a variable or a constant.
A \emph{literal} is an atom or its negation. 
Formulas are built from atoms using $\neg,\wedge,\vee$ and may be prefixed by quantifiers $\forall x$ and $\exists x$. 
A formula is a \emph{sentence} if all variables are bound. 
We write $\mathcal{P}_\varphi$ for the vocabulary occurring in a formula $\varphi$. 
A formula is \emph{ground} if it contains no variables, i.e., it is propositional where ground atoms are viewed as boolean variables.
Given a first-order sentence $\sentence$ and a domain $\domain$, \emph{grounding} of $(\sentence, \domain)$ is the propositional formula obtained by instantiating all quantifiers in $\sentence$ over elements from $\domain$ with the usual semantics.
Unless otherwise specified, the first-order formulas in this paper are assumed to be \emph{constant-free}, i.e., they contain no constant symbols.

\begin{example}
Consider the sentence $\sentence = \forall x : \neg E(x,x) \wedge \forall x \exists y : E(x,y)$ and the vertex domain $\domain=\{e_1,e_2\}$.
The grounding of $(\sentence,\domain)$ is the propositional formula
\[(E(e_1,e_1)\vee E(e_1,e_2))
\wedge
(E(e_2,e_1)\vee E(e_2,e_2))
\wedge
\neg E(e_1,e_1)
\wedge
\neg E(e_2,e_2),
\]
where each ground atom, such as $E(e_1,e_2)$, is a propositional variable.
\end{example}

For a vocabulary $\mathcal{P}$ and a finite domain $\domain$,
A \emph{structure} $\structure$ is a tuple $(\domain, f)$, where $f$ interprets each predicate $P/k\in\mathcal P$ as a relation $f(P)\subseteq \domain^k$.
In this paper, a structure is often written as a set of ground literals interpreted by $f$, that is, $\structure = \bigcup_{P/k \in \mathcal{P}} \{P(a_1,\dots,a_k) \mid (a_1,\dots,a_k) \in f(P)\} \cup \{\neg P(a_1,\dots,a_k) \mid (a_1,\dots,a_k) \notin f(P)\}$.
The satisfaction relation $\models$ is defined as usual.
A \emph{model} $\structure$ of a sentence $\sentence$ over a domain $\domain$ is a structure with domain $\domain$ such that $\structure \models \sentence$.

\subsection{Knowledge Compilation}
\label{sec:prelim:kc}

Knowledge compilation (KC) is the process of transforming a propositional formula into a representation such that various logical queries on the formula can be answered efficiently by operating on the compiled representation.
In this paper, we focus on compilation into negation normal form and its variants~\cite{darwiche2002knowledge}.

A negation normal form (NNF) is a node labeled directed acyclic graph (DAG) with internal nodes labeled by logical connectives (e.g., $\wedge,\vee$), leaves labeled by boolean variables or their negations (literals), and a unique sink node (outdegree $0$) called the output.
For an NNF $C$, $\mathrm{var}(C)$ denotes its set of variables and $|C|$  denotes its size, i.e., the number of edges.

Given an NNF $C$ and an assignment $f$ of truth values to its variables, the evaluation of $C$ under $f$, denoted $C(f)$, is defined as the value computed at the output node when the leaves are assigned according to $f$ and internal nodes are evaluated according to their labels in a bottom-up fashion.
The assignments $f$ such that $C(f)=1$ are called the \emph{satisfying assignments} of $C$.
Two NNFs are \emph{logically equivalent} (or simply \emph{equivalent}) if they have the same set of satisfying assignments.
An NNF $C$ \emph{encodes} a propositional formula $\formula$ if $C$ and $\formula$ have the same set of satisfying assignments.

An NNF circuit $C$ is \emph{decomposable} (DNNF) if for every $\wedge$-node with children $C_1,\dots,C_r$, the boolean variable sets are pairwise disjoint:
$\mathrm{var}(C_i)\cap \mathrm{var}(C_j)=\emptyset$ for all $i\neq j$.
$C$ is \emph{deterministic} if for every $\vee$-node with children $C_1,\dots,C_r$, the disjuncts are pairwise inconsistent, i.e., $C_i\wedge C_j$ is unsatisfiable for all $i\neq j$.
A circuit satisfying both properties is a \emph{deterministic DNNF} (d-DNNF).
d-DNNFs are particularly of interest in knowledge compilation since they support a wide range of queries that can be answered in time polynomial in the size of the circuit~\cite{darwiche2001decomposable,darwiche2002knowledge}.
We also consider the more restrictive class of \emph{structured} d-DNNFs and OBDDs, which are defined with respect to a fixed variable tree and variable order, respectively.
We refer the reader to~\cite{darwiche2002knowledge} for more details on these languages and their properties.

We consider knowledge compilation for first-order logic, focusing primarily on d-DNNFs. 
The same questions can also be asked for more restrictive target languages, and we provide corresponding results for structured d-DNNFs and OBDDs.
% We consider knowledge compilation for first-order logic, where we focus on the d-DNNFs, but the same questions can be asked for structured d-DNNFs, OBDDs, and other target languages.
\begin{problem}
  \label{prob:kc-fo}
  Fix a first-order sentence $\sentence$. Given a finite domain $\domain$, output a d-DNNF circuit encoding the grounding of $(\sentence,\domain)$.
%   \noindent\textbf{Input:} A first-order sentence $\sentence$ and a domain $\domain$.
%   \noindent\textbf{Output:} A d-DNNF circuit encoding the grounding of $(\sentence,\domain)$.
\end{problem}
\noindent Given an algorithm that solves this problem, we mainly care about the size of the output circuit and the time complexity of the algorithm, both w.r.t.\ the domain size $|\domain|$.

% We consider knowledge compilation for first-order logic: input a first-order sentence $\sentence$ and a domain $\domain$, and output a d-DNNF circuit encoding the grounding of $(\sentence,\domain)$.
% Given an algorithm that solves this problem, we mainly care about the size of the output circuit and the time complexity of the algorithm, both w.r.t.\ the domain size $|\domain|$ for a fixed sentence $\sentence$.

\subsection{Two-Variable Logic in Scott Normal Form}
\label{subsec:fo2}
The two-variable fragment of first-order logic, denoted by \fotwo{}, is a well-studied decidable fragment~\cite{mortimer1975languages,1997decision}.
\fotwo{} exhibits particularly favorable properties for various computational tasks, e.g., model counting for \fotwo{} is in time polynomial in the domain size for a fixed sentence, while adding another third variable renders the problem intractable~\cite{beame2015symmetric}.
Therefore, we study knowledge compilation for \fotwo{} as a first step towards understanding the compilation of more expressive first-order fragments.

% In this paper we restrict to \fotwo.
Any \fotwo sentence can be transformed into Scott normal form (SNF)~\cite{Scott1962}:
\begin{align*}
\sentence \;=\; \forall x \forall y: \, \phi(x,y)
\;\wedge\;
\bigwedge_{k\in[m]} \forall x \exists y: \, \psi_k(x,y),
\end{align*}
where $\phi(x,y)$ and each $\psi_k(x,y)$ are quantifier-free and use only variables $x$ and $y$.
For convenience, we further assume that each existential conjunct has been atomized to $\forall x\exists y: \beta_k(x,y)$ where $\beta_k(x,y)$ is a single binary atom.
This is easily achievable by introducing a fresh binary predicate symbol $\beta_k$ for each $\psi_k(x,y)$ and adding universal constraints ensuring $\beta_k(x,y)\leftrightarrow \psi_k(x,y)$.

Throughout the paper, we assume that all \fotwo sentences to be compiled are in SNF.
This assumption is without loss of generality because: 1) the transformation into SNF increases the size of the sentence only linearly; 2) there is a bijective mapping between the models of the original sentence and those of its SNF sentence~\cite{beame2015symmetric,wang2024lifted}, so that the queries (e.g., model counting and enumeration) can be carried out on the SNF sentence and the results can be easily translated back to the original sentence.

\section{Hardness of \texorpdfstring{$\fotwo$}{FO2} Compilation}
\label{sec:hardness}

In this section, we show that there exists an \fotwo sentence whose DNNF representation over a domain of size $n$ cannot be in size polynomial (or even subexponential) in $n$.
This immediately implies the same lower bound for d-DNNF representations.

\begin{theorem}
\label{th:hardness_fo2}
% There exists an $\fotwo$ sentence $\sentence$ such that every DNNF circuit encoding $\sentence$ over a domain of size $n$ has size $2^{\Omega(n)}$.

There exists an $\fotwo$ sentence $\sentence$ such that for every $n\ge 1$ and domain $\domain$ with $|\domain|=n$, every DNNF circuit encoding the grounding of $(\sentence,\domain)$ has size $2^{\Omega(n)}$.
\end{theorem}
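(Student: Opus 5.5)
The plan is to reduce to a known DNNF lower bound for propositional CNFs, using conditioning. DNNF supports conditioning: if a DNNF $C$ encodes $F$ and we fix some variables to constants, replacing those leaves by $\top/\bot$ and simplifying gives a DNNF for the conditioned formula. Its size is at most $|C|$, and decomposability is preserved. So it suffices to find one fixed \fotwo{} sentence whose grounding, after a suitable partial assignment to the binary atoms, becomes a propositional formula over the unary atoms that is already known to need DNNF size $2^{\Omega(n)}$.

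I would take the sentence
\[
\sentence \;=\; \forall x\forall y:\; E(x,y)\vee U(x)\vee U(y).
\]
Its grounding over $\domain=\{e_1,\dots,e_n\}$ is the conjunction of the clauses $E(e_i,e_j)\vee U(e_i)\vee U(e_j)$ over all ordered pairs, including $i=j$. Fix an undirected graph $G_n$ on $\domain$, taken from a family of bounded-degree expanders (or any family of bounded-degree graphs of linear treewidth); such a family exists for all sufficiently large $n$. Condition as follows: set $E(e_i,e_j)=\bot$ whenever $\{e_i,e_j\}$ is an edge of $G_n$ (in both orientations), and set $E(e_i,e_j)=\top$ for all other pairs, including every diagonal pair $E(e_i,e_i)$. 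Every clause for a non-edge or a diagonal pair is then satisfied. What remains is exactly the monotone 2-CNF
\[
F_{G_n} \;=\; \bigwedge_{\{e_i,e_j\}\in G_n}\bigl(U(e_i)\vee U(e_j)\bigr),
\]
whose models are the vertex covers of $G_n$.

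Now invoke the known lower bound from the communication-complexity approach to knowledge compilation~\cite{Bova2014ASE,bova2016knowledge}: for bounded-degree expander graphs $G_n$, every DNNF encoding the vertex-cover CNF $F_{G_n}$ has size $2^{\Omega(n)}$. The underlying argument covers $F_{G_n}$ by a DNNF of size $s$ with at most $s$ balanced rectangles. It then uses the expansion of $G_n$ to show that, for any balanced partition of the variables, many crossing edges form an induced matching. Each rectangle can cover only an exponentially small fraction of the models on that matching, which forces $s=2^{\Omega(n)}$. Combining this with the conditioning step, every DNNF for the grounding of $(\sentence,\domain)$ has size at least that of a DNNF for $F_{G_n}$, hence $2^{\Omega(n)}$. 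For the finitely many small $n$ where no suitable expander exists, the bound holds trivially by adjusting the constant, since a DNNF has size at least $1$.

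The main obstacle I expect is making sure the cited lower bound applies in exactly the form needed: for every $n$ or a dense enough sequence, with vertex count $\Theta(n)$, and stated for DNNF rather than only for d-DNNF or structured DNNF. If the literature gives it only for a sparse sequence of $n$, I would pad. That is, I would take an expander on $n'\ge n/c$ vertices and leave the remaining elements isolated by setting all their $E$-atoms to $\top$ and conditioning their $U$-atoms to $\top$ as well. This still gives $2^{\Omega(n)}$. If needed, I would reprove the rectangle-cover step directly from the induced-matching property, since that is the only graph property used.
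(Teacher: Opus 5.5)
Your argument is correct, and it reaches the bound by a different route from the paper. The paper takes $\sentence_P=\forall x\exists y\,P(x,y)\wedge\forall x\exists y\,P(y,x)$ and applies Darwiche's polynomial-time minimization (Theorem~\ref{th:min-poly}) to a hypothetical small DNNF, so that only the minimum-cardinality models survive. These are exactly the permutation matrices, and the paper then invokes the $2^{\Omega(n)}$ bound for $\mathsf{PERM}_n$ (Theorem~\ref{th:perm-lb}). You instead plant an arbitrary graph into a purely universal sentence by conditioning the binary atoms, and then import the lower bound of Bova et al.\ for monotone 2-CNFs of bounded-degree expanders.

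Your route has two advantages. First, conditioning never increases size and is supported by essentially every language in the compilation map, so you need no min-cardinality transformation. Second, you obtain a formally stronger statement: hardness already holds for an existential-free \fotwo{} sentence consisting of a single universal clause, so the blow-up is not caused by witness requirements. The paper's route, in turn, uses one natural target function that does not depend on an $n$-dependent expander or on padding, and its hard sentence is also one of the benchmarks in the experiments.

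Two minor points. Your parenthetical claim that any bounded-degree family of linear treewidth would do is not supported by the induced-matching argument you sketch. An expander on $n/2$ vertices together with $n/2$ isolated vertices has linear treewidth but admits a balanced partition with no crossing edges, so stick to expanders. Also, if you reprove the cited bound yourself, note that different rectangles come with different partitions and hence different matchings. The counting must therefore be done against a common measure, such as the set of all models of $F_{G_n}$, rather than against the test models on each rectangle's own matching.
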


The proof relies on the transformation of DNNF circuits under minimization introduced by Darwiche~\cite{darwiche2001decomposable}, and a lower bound on DNNF size for the Boolean function of permutation $\mathsf{PERM}_n$ (or equivalently, the perfect matchings of the complete bipartite graph $K_{n,n}$) established by de Colnet~\cite{de2020lower}.
We recall these results below.

Given an NNF $C$, the \emph{minimization} of $C$ is an NNF whose satisfying assignments are exactly those satisfying assignments of $C$ with the minimum number of true variables. 
\begin{theorem}[Combined from 
\cite{Darwiche01012001} and {\protect\cite[Section~2.5]{darwiche2001decomposable}}]
\label{th:min-poly}
Given a DNNF circuit $C$, one can construct a DNNF circuit $C_{min}$ encoding the minimization of $C$, in time polynomial in $|C|$.
\end{theorem}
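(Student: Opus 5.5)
The plan is to compute, bottom-up, the minimum number of true variables over the satisfying assignments of each node, and then prune every $\vee$-branch that cannot achieve its node's minimum. First I would add the smoothing-like fix described below, which pruning alone needs in order to be correct. For a node $N$ of $C$, let $\mathrm{mc}(N)$ be the minimum number of variables of $\mathrm{var}(N)$ set to true in a satisfying assignment of $N$ over $\mathrm{var}(N)$, with $\mathrm{mc}(N)=\infty$ if $N$ is unsatisfiable. Because $C$ is decomposable, these values are computable in one bottom-up pass:
\begin{itemize}
\item a positive literal gets $1$, a negative literal gets $0$, and the constants $\top$ and $\bot$ get $0$ and $\infty$;
\item an $\wedge$-node gets the sum of its children's values, which is valid because the children's variable sets are disjoint, so their minimal assignments combine independently;
\item an $\vee$-node $N$ with children $N_1,\dots,N_r$ gets $\min_i \mathrm{mc}(N_i)$.
\end{itemize}
The $\vee$-rule needs a justification: an assignment to $\mathrm{var}(N)$ satisfying $N_i$ can set every variable of $\mathrm{var}(N)\setminus\mathrm{var}(N_i)$ to false without losing satisfaction. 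Each of these steps takes time linear in $|C|$.

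Next I build $C_{min}$ by structural recursion with memoization, mapping each node $N$ to a node $M(N)$. Nodes with $\mathrm{mc}(N)=\infty$ become $\bot$, literals stay unchanged, and an $\wedge$-node maps to the $\wedge$ of the images of its children. For an $\vee$-node $N$, I keep only the children $N_i$ with $\mathrm{mc}(N_i)=\mathrm{mc}(N)$. Each kept child is replaced by
\[M(N_i)\wedge\bigwedge_{v\in \mathrm{var}(N)\setminus\mathrm{var}(N_i)}\neg v .\]
The added negative literals are the crucial fix. Without them, $x\vee y$ would be left unchanged by pruning and would still accept $x=y=1$, which is not a minimal assignment. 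With them, the new conjunction is still decomposable, since the added variables are disjoint from $\mathrm{var}(N_i)$. Each $\vee$-edge receives at most $|\mathrm{var}(C)|$ new literal edges, so $|C_{min}|=O(|C|\cdot|\mathrm{var}(C)|)$ and the construction runs in polynomial time.

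Correctness would be proved by induction on the DAG, with the invariant that $M(N)$ has the same variables as $N$ and that its satisfying assignments over $\mathrm{var}(N)$ are exactly the satisfying assignments of $N$ with $\mathrm{mc}(N)$ true variables.
\begin{itemize}
\item For an $\wedge$-node, decomposability means the number of true variables is a sum over the children, and the total attains the minimum sum iff each child attains its own minimum. This is exactly the set of assignments accepted by the conjunction of the minimized children.
\item For an $\vee$-node, take an assignment of cardinality $\mathrm{mc}(N)$ satisfying some child $N_i$. Then $\mathrm{mc}(N_i)\le \mathrm{mc}(N)$, hence equality holds. So the restriction to $\mathrm{var}(N_i)$ is minimal for $N_i$, and all variables outside $\mathrm{var}(N_i)$ must be false. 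Such an assignment is therefore accepted by the new branch. Conversely, every assignment accepted by a new branch satisfies $N$ and has exactly $\mathrm{mc}(N)$ true variables.
\end{itemize}
Finally, at the output node I conjoin $\neg v$ for any variable of the intended variable set that does not occur in $C$, so that minimization is taken over the full assignment space.

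The step I expect to need the most care is this $\vee$-case, specifically the interaction between pruning and variables that are missing from a branch. Naive pruning is unsound there, and the added negative literals have to be shown both to preserve decomposability and to make the invariant hold exactly rather than only give an over-approximation. If the cited construction instead assumes a smooth DNNF, I would equivalently first smooth $C$ in polynomial time and then prune; the argument is the same.
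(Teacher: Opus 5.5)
Your proposal is correct and follows essentially the construction the paper invokes by citation (the paper gives no proof of its own): compute minimum cardinalities bottom-up on a smooth DNNF and delete every disjunct whose minimum cardinality exceeds that of its $\vee$-node. Your padding with $\neg v$ for $v\in\mathrm{var}(N)\setminus\mathrm{var}(N_i)$ is exactly what smoothing with $(v\vee\neg v)$ followed by this pruning produces. One cosmetic fix: the invariant $\mathrm{var}(M(N))=\mathrm{var}(N)$ should be stated only for nodes with finite $\mathrm{mc}(N)$, since nodes mapped to $\bot$ lose their variables; this is harmless, because such nodes are never children of the image of a finite-$\mathrm{mc}$ node, so $\bot$ survives only when the root is unsatisfiable.
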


The $\mathsf{PERM}_n$ can be represented as a propositional formula over $n^2$ boolean variables $\{p_{ij}\mid i,j\in[n]\}$, and its satisfying assignments of $\mathsf{PERM}_n$ are exactly the $n \times n$ permutation matrices, i.e., there is exactly one true variable in $\{p_{i1},\dots,p_{in}\}$ for each $i\in[n]$ and exactly one true variable in $\{p_{1j},\dots,p_{nj}\}$ for each $j\in[n]$.
\begin{theorem}[\cite{de2020lower}\protect\footnotemark]
\label{th:perm-lb}
Every DNNF circuit encoding $\mathsf{PERM}_n$ has size $2^{\Omega(n)}$.

\footnotetext{Though not stated explicitly, one can easily verify from the proof of Theorem~1 therein.}
\end{theorem}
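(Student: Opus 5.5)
The plan is the standard rectangle-cover argument from communication complexity, as used by de Colnet~\cite{de2020lower}. It has three parts: (i) turn a DNNF $C$ of size $s$ encoding $\mathsf{PERM}_n$ into a cover of the models of $\mathsf{PERM}_n$ by at most $s$ combinatorial rectangles, each suitably balanced; (ii) bound the number of permutation matrices inside any single such rectangle; (iii) divide $n!$ by that bound to get $s \ge 2^{\Omega(n)}$.

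\textbf{Step (i): extracting balanced rectangles.} Fix a model $M$ of $\mathsf{PERM}_n$; it has exactly $n$ true variables. Take a certificate of $M$ in $C$: keep one satisfied child at each $\vee$-node and all children at each $\wedge$-node. This gives a tree in which, by decomposability, the sibling subtrees below each $\wedge$-node have disjoint variables. Assign each node the number of true variables of $M$ whose positive literals occur in its certificate subtree. Walk down from the root, always moving into the child carrying the largest count. Since the count at a $\wedge$-node is the sum of its children's counts, we stop at some node $v$ whose count lies in $[n/3, 2n/3]$; if a leaf is reached, $n$ is small and the bound is trivial. Let $X = \mathrm{var}(C_v)$ and $Y$ be the remaining variables. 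Let $R_v$ consist of the assignments $A \cup B$ where $A$ over $X$ satisfies $C_v$, and $B$ over $Y$ is such that $C$ with $C_v$ replaced by the constant $1$ accepts $A_0 \cup B$ for some fixed witness $A_0$. Decomposability shows $R_v$ is a rectangle in $X \times Y$ contained in $\mathsf{PERM}_n$ (this is Darwiche's rectangle lemma, cf.~\cite{bova2016knowledge}). We then keep only those models of $R_v$ that put between $n/3$ and $2n/3$ true variables in $X$. The result is a cover of all $n!$ models by at most $s$ ``balanced'' sub-rectangles, one per node $v$.

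\textbf{Step (ii): counting permutations in a rectangle.} Let $R = R_X \times R_Y \subseteq \mathsf{PERM}_n$ and take $A, A' \in R_X$ and $B \in R_Y$. Both $A \cup B$ and $A' \cup B$ are permutation matrices, so $A$ and $A'$ both occupy exactly the rows and columns that $B$ leaves free. Hence all $X$-parts use the same row set $I$ and column set $J$, with $|I| = |J| = k$, and all $Y$-parts live on the complement. Therefore $|R| \le k!\,(n-k)!$. Balance forces $k \in [n/3, 2n/3]$, so $|R| \le n!/\binom{n}{k} \le n!/\binom{n}{\lceil n/3\rceil}$.

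\textbf{Step (iii): concluding.} Since at most $s$ rectangles cover all $n!$ models, we get $s \ge \binom{n}{\lceil n/3\rceil} = 2^{\Omega(n)}$.

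The main obstacle is step (i). The rectangle associated with $v$ must be defined so that it is independent of the particular model $M$, because only then does the number of rectangles stay bounded by $|C|$. The balance condition, by contrast, depends on $M$, which is why it is imposed only by restricting to the balanced models inside each rectangle. I would carefully verify that this restriction preserves the structure used in step (ii), namely that all $X$-parts share the same row and column sets. That property follows from the rectangle property of the unrestricted $R_v$ alone, so step (ii) applies unchanged to the restricted subset.
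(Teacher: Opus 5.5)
Your overall strategy is sound. It is the rectangle-cover method behind de Colnet's bound~\cite{de2020lower}, which the paper only cites without proof. Steps (ii) and (iii) are correct. The observation that all $X$-parts of a rectangle inside $\mathsf{PERM}_n$ occupy the same rows and columns gives $|R|\le k!\,(n-k)!$ directly. Balancing with respect to the true variables of $M$, rather than with respect to the variable partition, is exactly what forces $k\in[n/3,2n/3]$.

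The gap is the rectangle in step (i). Replacing $C_v$ by the constant $1$ does not make the rest of $C$ independent of $X$. Decomposability only forbids variables of $X$ under the other children of $\wedge$-nodes above $v$; they may still occur under other children of $\vee$-nodes. So acceptance of $A_0\cup B$ can depend on the values $A_0$ gives them. For example, take the DNNF $\bigl((x_1\vee x_2)\wedge y_1\bigr)\vee(x_1\wedge y_2)\vee(x_2\wedge y_3)$ with $v=x_1\vee x_2$:
\begin{itemize}
\item If $A_0(x_1)=1$, then $B=\{y_1\mapsto 0,y_2\mapsto 1,y_3\mapsto 0\}$ is accepted together with $A_0$ once $v$ is set to $1$. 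Yet $A=\{x_1\mapsto 0,x_2\mapsto 1\}$ satisfies $C_v$ and $C(A\cup B)=0$.
\item If $A_0(x_1)=0$, exchange $x_1$ with $x_2$ and $y_2$ with $y_3$.
\end{itemize}
So for no choice of $A_0$ is your $R_v$ contained in the models of $C$, and step (ii) needs exactly that containment.

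The standard repair is to let $R_v$ be the set of assignments that admit a certificate containing $v$. In such a certificate, no leaf outside the subtree at $v$ mentions a variable of $X$. Otherwise that leaf and $v$ would lie below distinct children of the node where their root paths split, and in a certificate paths can only split at $\wedge$-nodes, contradicting decomposability. Hence subtrees at $v$ can be exchanged between certificates. This shows at once that $R_v$ is a rectangle, that $R_v\subseteq C^{-1}(1)$, and that $M\in R_v$. Equivalently, put $R_Y=\{B : C(A\cup B)=1 \text{ for every } A\models C_v\}$.

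Three smaller points also need fixing.
\begin{itemize}
\item \emph{Fan-in.} The descent only works for $\wedge$-nodes of fan-in two. Under the paper's definition fan-in is unbounded, so an $\wedge$-node of count $n$ may have $n$ children of count $1$, and the walk can skip over $[n/3,2n/3]$. Binarize the $\wedge$-nodes first; this preserves decomposability and at most doubles the number of edges. Afterwards the larger child of an $\wedge$-node of count $c>2n/3$ has count $>n/3$.
\item \emph{Counts.} You need the root to have count $n$, and the count at $v$ to equal the number of true variables of $M$ in $X$. Both rely on the fact that a certificate of a model of $\mathsf{PERM}_n$ mentions all $n^2$ variables. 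This holds because its leaves form an implicant consistent with $M$, and flipping any single variable of a permutation matrix gives a non-model. You should state this.
\item \emph{Balanced restriction.} Restricting to balanced models is harmless but vacuous, since by step (ii) all models in $R_v$ have the same $k$.
\end{itemize}
With these repairs you obtain $|C|=\Omega\bigl(\binom{n}{\lceil n/3\rceil}\bigr)=2^{\Omega(n)}$.
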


\begin{proof}[Proof of Theorem~\ref{th:hardness_fo2}]
Consider the \fotwo sentence:
\begin{align*}
\sentence_P
\,=\,
\big( \forall x \exists y: P(x,y) \big)
\, \wedge\,
\big( \forall x \exists y: P(y,x)  \big).
\end{align*}
We prove by contradiction that any DNNF circuit encoding $\sentence_P$ over a domain of size $n$ requires size $2^{\Omega(n)}$.

For a domain $\domain=\{e_1,\dots,e_n\}$, assume for contradiction that there exists a DNNF circuit $C$ encoding the grounding of $(\sentence_P, \domain)$ with size $2^{o(n)}$.
Let $p_{ij}$ be the boolean variable in $C$ corresponding to the ground atom $P(e_i,e_j)$.
The sentence $\sentence_P$ (and thus $C$) enforces that for every $i\in[n]$, there exists $j\in[n]$ such that $p_{ij}$ is true, and for every $j\in[n]$, there exists $i\in[n]$ such that $p_{ij}$ is true.
Let $C_{\min}$ be the minimization of $C$ obtained by Theorem~\ref{th:min-poly}.
One can easily verify that the satisfying assignments of $C_{\min}$ correspond exactly to the permutations of $[n]$, and thus $C_{\min}$ encodes $\mathsf{PERM}_n$.
% Since $C$ encodes the grounding of $\sentence_P$, every satisfying assignment of $C$ has, for each $i\in[n]$, some $j\in[n]$ with $p_{ij}$ true, and for each $j\in[n]$, some $i\in[n]$ with $p_{ij}$ true.
% Therefore every satisfying assignment of $C$ contains at least $n$ true variables.
% Moreover, a satisfying assignment has exactly $n$ true variables iff it contains exactly one true variable in each row and exactly one true variable in each column, i.e., iff it is a permutation matrix. 
% Hence the satisfying assignments of $C_{\min}$ are exactly those of $\mathsf{PERM}_n$, so $C_{\min}$ encodes $\mathsf{PERM}_n$.
% 
From Theorem~\ref{th:min-poly}, $C_{\min}$ is a DNNF circuit of size polynomial in $|C|$, and therefore has size $2^{o(n)}$.
However, by Theorem~\ref{th:perm-lb}, any DNNF circuit encoding $\mathsf{PERM}_n$ requires size $2^{\Omega(n)}$, leading to a contradiction.
\end{proof}

\section{d-DNNF compilation for \texorpdfstring{$\fotwo$}{FO2}}
\label{sec:compilation}

In this section, we present a compilation scheme for \fotwo.
Throughout, we assume that the sentence is in Scott normal form
(cf.\ Section~\ref{subsec:fo2}).
Given a domain $\domain=\{e_1,\dots,e_n\}$, our goal is to compile the grounding of $(\sentence, \domain)$ into a propositional d-DNNF circuit.
We also show how the construction yields OBDD and structured d-DNNF representations.
% The algorithm can be easily adapted to compile into structured d-DNNF or OBDD.

The section is organized as follows.
Section~\ref{sec:types} introduces unary and binary types, which describe the possible local truth assignments of elements and of element pairs, respectively.
These types provide the branching objects used in the compilation.
Section~\ref{sec:overview} then presents a two-stage procedure, where Stage~I assigns unary types to elements, and Stage~II assigns binary types to element pairs. 
Section~\ref{sec:compatibility} then gives an efficient extendability test to prune infeasible branches, which checks whether the current partial type assignment can be extended to a full model.

\subsection{Types}
\label{sec:types}

Let $\mathcal{P}$ be the set of unary and binary predicates in $\sentence$.
We say that a set of literals is \emph{maximally consistent} if it does not contain any contradictory literals and cannot be extended by adding any other literal.
Then we define unary and binary types as follows.

\begin{definition}[Unary and binary types]
\label{def:types}
A \emph{unary type} $\tau$ is a maximally consistent set of unary literals over $\mathcal{P}$ involving a single variable $x$ (including the diagonal literals $R(x,x)$ for binary predicates $R$).
A \emph{binary type} $\pi$ is a maximally consistent set of binary literals over $\mathcal{P}$ involving distinct variables $x$ and $y$.
A type is either a unary type or a binary type.
% for every $R\in\mathcal{P}_2$, $\tau$ also contains exactly one of $R(x,x)$ and $\neg R(x,x)$.
% for every $R\in\mathcal{P}_2$ and each orientation $(x,y)$ and $(y,x)$, exactly one of $R(x,y),\neg R(x,y)$ belongs to $\pi$ and exactly one of $R(y,x),\neg R(y,x)$ belongs to $\pi$
\end{definition}

We often identify a type with a formula obtained by conjoining all its literals, and write $\tau(x)$ and $\pi(x,y)$. 
For a structure $\structure$ and an element $a\in\domain$, we say that $a$ \emph{realizes} a unary type $\tau$ if $\structure\models \tau(a)$, and for two distinct elements $a,b\in\domain$, we say that the ordered pair $(a,b)$ \emph{realizes} a binary type $\pi$ if $\structure\models \pi(a,b)$.
The realization of types for an element or a pair of elements is unique because types are maximally consistent.
% A type is \emph{valid} if it is consistent with the universal constraint $\forall x\forall y:\phi(x,y)$.
Therefore, a model of $\sentence$ over $\domain$ (equivalently, a satisfying assignment of the propositional grounding of $(\sentence,\domain)$) can be viewed as a type assignment that assigns a unary type to each element $e_i$ and a binary type to each ordered pair $(e_i,e_j)$ for $i < j$.

When dealing with types, we only need to consider those that are \emph{valid}.
A unary type $\tau$ is \emph{valid} w.r.t.\ $\sentence$ if
$
\tau(x) \models \phi(x,x),
$
where $\phi$ is from the universal conjunct $\forall x \forall y:\phi(x,y)$ in the Scott normal form of $\sentence$.
Given unary types $\tau$ and $\tau'$, a binary type $\pi$ is \emph{valid} for $(\tau,\tau')$ if
$
\tau(x) \wedge \tau'(y) \wedge \pi(x,y) \models \phi(x,y) \wedge \phi(y,x).
$
Let $\utypes_\sentence$ be the set of valid unary types and, for each pair $(\tau,\tau')\in \utypes_\sentence\times \utypes_\sentence$, let $\btypes_\sentence(\tau,\tau')$ be the set of binary types valid for $(\tau,\tau')$.
When the sentence $\sentence$ is clear from context, we may omit the subscript and write $\utypes$ and $\btypes(\tau,\tau')$ instead.

% Restricting to valid types ensures that the universal constraint $\forall x\forall y\,\phi(x,y)$ are satisfied; we only need to focus on existential constraints, which we handle via compatibility checks below.

\begin{example}
\label{ex:2-color}
Consider the sentence defining 2-colored graphs:
\begin{align*}
  \sentence_{RB} = \ &\forall x: \big(R(x) \vee B(x)\big) \wedge \big(\neg R(x) \vee \neg B(x)\big) \ \wedge  \ \\
  &\forall x\forall y: \big(E(x,y) \rightarrow E(y,x)\big) \wedge \big(E(x,y) \rightarrow ((R(x)\wedge B(y)) \vee (B(x)\wedge R(y)))\big).
\end{align*}
There are $2^3$ unary types in total, corresponding to different combinations of truth values of atoms $R(x),B(x), $ and $E(x,x)$, but only two of them are valid: 
$\tau_R=\{R(x), \neg B(x), \neg E(x,x)\}$ and $\tau_B=\{\neg R(x), B(x), \neg E(x,x)\}$.
For the unary type pairs $(\tau_R,\tau_B)$ and $(\tau_B,\tau_R)$, there are two valid binary types: $\pi_{E}=\{E(x,y), E(y,x)\}$ and $\pi_{\neg E}=\{\neg E(x,y),\neg E(y,x)\}$, while for the pairs $(\tau_R,\tau_R)$ and $(\tau_B,\tau_B)$, only $\pi_{\neg E}$ is valid.

\end{example}

\subsection{Two-Stage Compilation}
\label{sec:overview}

% Our compilation constructs a d-DNNF circuit by branching on \emph{types} rather than on individual ground atoms. 
The construction of the d-DNNF circuit proceeds in two stages.
\begin{itemize}
\item Stage I: unary type assignment.
We process domain elements $e_1,\dots,e_n$ in order. 
At an element $e_i$ we branch over $\tau\in \utypes$, and proceed to $e_{i+1}$. 
When all elements have been processed, the path from the root to the current node corresponds to a complete unary type assignment, and we then
move to Stage II.

\item Stage II: binary type assignment.
Based on a unary assignment, we process ordered pairs $(e_i,e_j)$ with $1\le i<j\le n$ in lexicographic order of $(i,j)$.
At each pair $(e_i,e_j)$ we branch over $\pi\in \btypes(\tau_i,\tau_j)$, 
and continue to the next pair until all pairs have been processed.
\end{itemize}

We implement this two-stage compilation as a recursive procedure, as shown in Algorithm~\ref{alg:two-stage-algorithm}.
Each call \textsc{Stage-I}$(i,\theta)$ returns the output node of a subcircuit whose satisfying assignments are exactly the completions of the current context $\theta$, where the context $\theta$ is formally defined as follows.
\begin{definition}[Context]\label{def:context}
The context $\theta$ is a propositional formula 
% over the grounding variables that is 
obtained by conjoining the branch constraints accumulated along the current recursive call.
\end{definition}

Intuitively, the context $\theta$ captures the constraints imposed by the type decisions made so far and thus represents a partial type assignment.
Although branching on valid types ensures the universal conjunct
$\forall x\forall y:\phi(x,y)$, some branches may still violate the existential constraints $\forall x\exists y:\beta_k(x,y)$ in $\sentence$.
We prune such branches by checking whether their contexts can be extended to full models of $\sentence$ over $\domain$.
We call this the \textsc{Extendable} test and discuss its details in Section~\ref{sec:compatibility}.

\begin{example}
\label{ex:context}
Consider the sentence defining graphs without isolated vertices:
\begin{align*}
\sentence_{E} = \big(\forall x: \neg E(x,x)\big) \wedge  \big(\forall x\forall y: E(x,y) \rightarrow E(y,x)\big) \wedge \big(\forall x\exists y: E(x,y)\big).
\end{align*}
\noindent
We compile $\sentence_{E}$ over the vertex domain $\domain=\{e_1,e_2,e_3\}$.
There is only one valid unary type $\tau=\{\neg E(x,x)\}$.
Hence Stage~I produces the unique context
$
\theta=\tau(e_1)\wedge \tau(e_2)\wedge \tau(e_3).
$
In Stage~II, suppose we assign to the pair $(e_1,e_2)$ the binary type
$
\pi_{\neg E}=\{\neg E(x,y),\neg E(y,x)\},
$
which updates the context to
$
\theta'=\theta\wedge \pi_{\neg E}(e_1,e_2).
$
When the next pair $(e_1,e_3)$ is processed, the branch corresponding to $\pi_{\neg E}$ can be pruned, because
$
\theta' \wedge \pi_{\neg E}(e_1,e_3)
$
is not extendable to any full model of $\sentence_{E}$ over $\domain$.
Indeed, $e_1$ is adjacent to neither $e_2$ nor $e_3$ under this context, and the existential requirement $\exists y\,E(e_1,y)$ cannot be satisfied.
\end{example}

The function \textsc{Stage-I}$(i,\theta)$ creates a set $\mathcal{S}$ to collect the subcircuits for each extendable unary type branch at element $e_i$, and returns an $\vee$-node over them.
For each unary type $\tau$, it recursively calls \textsc{Stage-I}$(i{+}1,\theta \wedge \tau(e_i))$ to compile the subcircuit for the remaining elements, and conjoins it with an $\wedge$-node over the literal leaves in $\tau(e_i)$.
Then all such $\wedge$-nodes are combined into an $\vee$-node and returned.
\cref{fig:or-node-types} illustrates the decision structure used in Stage I when branching on unary types of an element in~\cref{ex:2-color}.
When all elements have been processed, it invokes \textsc{Stage-II}$((1,2),\theta)$, which works analogously.
The function $\text{next}(i,j)$ returns the next pair to be processed after $(e_i,e_j)$ in lexicographic order.
When all pairs have been processed, the final output of \textsc{Stage-I}$(1,\top)$ is the compiled circuit $C_{\sentence,\domain}$.

\begin{figure}[t]
\centering
\begin{tikzpicture}
[
    op/.style={draw, circle, inner sep=1.3pt, minimum size=4.5mm},
    leaf/.style={draw, inner sep=2.0pt, rounded corners, font=\footnotesize},
]

  % --- Level 0: Root ---
  \node[op] (root) {$\vee$};

  % --- Level 1: First AND layer ---
  \node[op, below left=0.2cm and 1.0cm of root] (typeRed) {$\wedge$};
  \node[op, below right=0.2cm and 1.0cm of root] (typeBlue) {$\wedge$};

  \draw (root) -- (typeRed);
  \draw (root) -- (typeBlue);

  % --- Level 2: Branch Components ---

  \node[leaf, below=1.0cm of root] (nE) {$\neg E(e_i,e_i)$};

  \node[leaf, left=0.3cm of nE] (nBlue) {$\neg B(e_i)$};
  \node[leaf, left=0.3cm of nBlue] (Red) {$R(e_i)$};
  \node[op, left=0.6cm of Red] (veL) {$\vee$};

  \node[leaf, right=0.3cm of nE] (nRed) {$\neg R(e_i)$};
  \node[leaf, right=0.3cm of nRed] (Blue) {$B(e_i)$};
  \node[op, right=0.6cm of Blue] (veR) {$\vee$};

  \draw (typeRed) -- (nE);
  \draw (typeRed) -- (veL);
  \draw (typeRed) -- (Red);
  \draw (typeRed) -- (nBlue);

  \draw (typeBlue) -- (nE);
  \draw (typeBlue) -- (veR);
  \draw (typeBlue) -- (nRed);
  \draw (typeBlue) -- (Blue);
\end{tikzpicture}
\caption{When branching on $e_i$ at Stage~I in~\cref{ex:2-color}, we have two disjuncts corresponding to the two valid unary types $\tau_R$ and $\tau_B$. The two $\vee$-nodes at the bottom connect to the subcircuits for the remaining elements.}
\label{fig:or-node-types}
\end{figure}

\begin{algorithm}[tb]
    \caption{Two-Stage Compilation}
    \label{alg:two-stage-algorithm}
    
    \SetKwInOut{Input}{Input}
    \SetKwInOut{Output}{Output}

    \SetKwProg{Fn}{Function}{ is}{end}
    
    \Input{A fixed sentence $\sentence$ and a finite domain $\{e_1,\dots,e_n\}$}
    \Output{A d-DNNF circuit}
    
    \tcc{$\textsc{Or}(\cdot)$ and $\textsc{And}(\cdot)$ create $\vee$-nodes and $\wedge$-nodes respectively.}
    
    \BlankLine
    \Return \textsc{Stage-I}(1, $\top$)\;
    
    \BlankLine
    \Fn{\textnormal{\textsc{Stage-I}$ (i, \theta)$}}{
        \If{$i=n+1$}{
            \Return \textsc{Stage-II}((1, 2), $\theta$)\;
        }
        $\mathcal{S} \gets \emptyset$\;
        \ForEach{$\tau \in \utypes$}{
            \If{\textnormal{\textsc{Extendable}$(\theta \wedge \tau(e_i))$}}{
                $C \gets \textsc{Stage-I}(i+1, \theta \wedge \tau(e_i))$\label{line:stage-i-recursion}\;
                $\mathcal{S} \gets \mathcal{S} \cup \{\textsc{And}(\textsc{And}(\tau(e_i)), C) \}$
            }
        }
        \Return \textsc{Or}($\mathcal{S}$) 
    }
    
    \BlankLine
    \Fn{\textnormal{\textsc{Stage-II}$((i, j), \theta)$}}{
        \If{$i=n$}{
            \Return $\top$\;
        }
        $\mathcal{S} \gets \emptyset$\;
        \ForEach{$\pi \in \btypes(\tau_i, \tau_j)$}{
            \If{\textnormal{\textsc{Extendable}$(\theta \wedge \pi(e_i, e_j))$}}{
                $C \gets \textsc{Stage-II}(\text{next}(i, j), \theta \wedge \pi(e_i, e_j))$\label{line:stage-ii-recursion}\;
                $\mathcal{S} \gets \mathcal{S} \cup \{\textsc{And}(\textsc{And}(\pi(e_i, e_j)), C) \}$
            }
        }
        \Return \textsc{Or}($\mathcal{S}$)
    }
\end{algorithm}

We show that the two-stage compilation constructs a d-DNNF circuit that encodes the propositional grounding of $(\sentence,\domain)$ as stated in the following theorem.

\begin{theorem}
\label{th:two-stage-correct}
Fix an \fotwo sentence $\sentence$ and a domain $\domain=\{e_1,\dots,e_n\}$.
Let $\Phi_{\sentence,\domain}$ be the propositional grounding of $(\sentence,\domain)$.
Algorithm~\ref{alg:two-stage-algorithm} outputs a d-DNNF that encodes $\Phi_{\sentence,\domain}$.
\end{theorem}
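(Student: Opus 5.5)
We argue by induction over the recursion, proving an invariant stronger than the theorem. The invariant is about the variables not yet fixed by the context. For a context $\theta$ reached at a call of \textsc{Stage-I}$(i,\theta)$ or \textsc{Stage-II}$((i,j),\theta)$, let $V_\theta$ be the ground atoms fixed by $\theta$ and $R_\theta$ the remaining ones. Concretely, for \textsc{Stage-I}$(i,\theta)$, $V_\theta$ contains all atoms $U(e_k)$ and $R(e_k,e_k)$ with $k<i$. For \textsc{Stage-II}$((i,j),\theta)$, it contains all unary/diagonal atoms, together with $R(e_k,e_l)$ and $R(e_l,e_k)$ for every pair $(k,l)$ preceding $(i,j)$ lexicographically.

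The invariant we prove by induction on the number of remaining branching steps is:
\begin{enumerate}
\item the returned node $C$ satisfies $\mathrm{var}(C)\subseteq R_\theta$;
\item $C$ is a d-DNNF;
\item $C$ is satisfied by exactly those assignments $g$ to $R_\theta$ for which $\theta\wedge g$ is a model of $\Phi_{\sentence,\domain}$.
\end{enumerate}
Since the initial context is $\top$ and $R_\top$ is the set of all ground atoms, item (3) at the root gives exactly that the output encodes $\Phi_{\sentence,\domain}$.

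Base case. \textsc{Stage-II} reaches $i=n$ with every atom fixed, so $R_\theta=\emptyset$ and it returns $\top$. We must show $\theta$ itself satisfies $\Phi_{\sentence,\domain}$, and this is where \textsc{Extendable} matters. Every type branched on is valid, so each element satisfies $\phi(e_k,e_k)$ and each pair satisfies $\phi(e_k,e_l)\wedge\phi(e_l,e_k)$; hence the universal conjunct holds. The existential conjuncts hold because the last branch was accepted by \textsc{Extendable}, and with nothing left to choose, extendability means $\theta$ is a model. We will therefore rely on the specification, to be established in Section~\ref{sec:compatibility}, that \textsc{Extendable}$(\theta')$ returns true iff $\theta'$ is consistent with some model of $\Phi_{\sentence,\domain}$. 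The initial call at the root also needs this; if the grounding is unsatisfiable, the first \textsc{Or} is empty and yields $\bot$, which is correct.

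Inductive step. Consider a call branching on the object $o$ (an element $e_i$ or a pair $(e_i,e_j)$). Each disjunct is $\textsc{And}(\textsc{And}(t(o)),C_t)$, where $t(o)$ mentions exactly the atoms of $o$ and, by the induction hypothesis, $\mathrm{var}(C_t)\subseteq R_{\theta\wedge t(o)}=R_\theta\setminus\mathrm{atoms}(o)$. So both $\wedge$-nodes are decomposable, and the variable bound propagates.

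Determinism holds because distinct types are maximally consistent sets over the same atoms of $o$, so two distinct $t(o)$ contradict each other. Two subtleties need care here:
\begin{itemize}
\item a unary type must fix all diagonal atoms $R(e_i,e_i)$;
\item a binary type must fix both orientations $R(e_i,e_j)$ and $R(e_j,e_i)$, so the pair $(i,j)$ with $i<j$ covers all off-diagonal atoms exactly once.
\end{itemize}

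Semantics. Let $g$ be an assignment to $R_\theta$ with $\theta\wedge g$ a model. Its restriction to the atoms of $o$ realizes a unique type $t$. That type is valid because the model satisfies $\phi$, and \textsc{Extendable}$(\theta\wedge t(o))$ is true because $\theta\wedge g$ witnesses it. Hence $g$ satisfies the disjunct for $t$ by the induction hypothesis. Conversely, an assignment satisfying some disjunct is a completion of $\theta\wedge t(o)$ to a model, again by the induction hypothesis.

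One extra point for types that are invalid or pruned: no model realizes them, so omitting them loses nothing. For Stage~II this uses that $\btypes(\tau_i,\tau_j)$ is exactly the set of pair types consistent with $\phi$ given the already fixed unary types.

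The main obstacle is the bookkeeping that the type decomposition partitions the ground atoms exactly. Every atom, diagonal and both orientations included, must be fixed at exactly one branching step; otherwise decomposability fails, or the circuit misses variables and encodes the wrong function over the full variable set. I would state this partition as a separate lemma first. The rest of the argument then relies only on the correctness specification of \textsc{Extendable}.
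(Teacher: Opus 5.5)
Your proposal is correct and follows essentially the same route as the paper's proof. Decomposability comes from the fact that the atoms fixed at a branching step never reappear below it, determinism from the mutual inconsistency of distinct types, and correctness of the encoding from type validity together with the specification of \textsc{Extendable}. The only difference is presentation: the paper argues the semantic part informally along root-to-leaf paths, while you package it as an explicit induction with the invariant that the subcircuit at context $\theta$ encodes exactly the model completions of $\theta$. You also make explicit the atom-partition bookkeeping and the terminal-case reliance on the last \textsc{Extendable} check, both of which the paper leaves implicit.
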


\begin{proof}[Proof]

Denote by $C_{\sentence,\domain}$ the output circuit of Algorithm~\ref{alg:two-stage-algorithm}.
We first show that every $\wedge$-node in $C_{\sentence,\domain}$ is decomposable, and every $\vee$-node in $C_{\sentence,\domain}$ is deterministic, thereby establishing that $C_{\sentence,\domain}$ is a d-DNNF circuit.

The $\wedge$-nodes are created only when branching on a unary type $\tau$ for an element $e_i$ or a binary type $\pi$ for a pair $(e_i,e_j)$, where the two conjuncts are the conjunction of literals in $\tau(e_i)$ or $\pi(e_i,e_j)$ and the $\vee$-node for the next element or pair.
When processing $e_i$ and branching on a unary type $\tau$, the $\wedge$-node created for this branch consists of two parts: the conjunction of literals in $\tau(e_i)$ and recursive subcircuit for the next element $e_{i+1}$.
Since the literals in $\tau(e_i)$ involve only unary atoms and diagonal atoms on $e_i$, while the recursive subcircuit never branches on such atoms again, the two conjuncts have disjoint variable sets, making the $\wedge$-node decomposable.
A similar argument applies to $\wedge$-nodes created when branching on a binary type $\pi$ for a pair $(e_i,e_j)$.
Thus every $\wedge$-node is decomposable.

$\vee$-nodes are created only when branching on unary types for an element $e_i$ and binary types for a pair $(e_i,e_j)$.
Consider an $\vee$-node created when branching on unary types for an element $e_i$.
For any two distinct unary types $\tau\neq\tau'$, there exists an atom $a$ such that $\tau(e_i)$ contains $a$ while $\tau'(e_i)$ contains $\neg a$, making $\tau(e_i)\wedge\tau'(e_i)$ unsatisfiable.
Therefore, the disjuncts at this $\vee$-node are mutually exclusive, and the node is deterministic.
The same argument applies to $\vee$-nodes created when branching on binary types for a pair $(e_i,e_j)$.
Thus every $\vee$-node is deterministic.

Then we show that $C_{\sentence,\domain}$ encodes $\Phi_{\sentence,\domain}$, i.e., they have the same satisfying assignments. 
Any satisfying assignment of $C_{\sentence,\domain}$ selects exactly one branch at each deterministic decision node, and thus determines a complete type assignment. 
Since the construction branches only on valid types and prunes every branch whose context cannot be extended to a full model of $\sentence$ over $\domain$, this assignment can be extended to a model of $\sentence$.
As all unary and binary types are fixed at a leaf, the induced grounding assignment itself satisfies $\Phi_{\sentence,\domain}$.
Conversely, any satisfying assignment of $\Phi_{\sentence,\domain}$ induces a complete valid type assignment that is extendable at every stage, so the corresponding branch is preserved by Algorithm~\ref{alg:two-stage-algorithm} and hence satisfies $C_{\sentence,\domain}$. 
Therefore, $C_{\sentence,\domain}$ encodes $\Phi_{\sentence,\domain}$.
\end{proof}

We observe that~\cref{alg:two-stage-algorithm} processes the propositional variables according to a natural block order. 
It first processes the unary types of the elements $e_1,\ldots,e_n$, and then the binary types of pairs $(e_i,e_j)$ with $i<j$ in lexicographic order. 
Let $B_i$ be the block of variables corresponding to the unary atoms and diagonal atoms on $e_i$, and let $B_{ij}$ be the block of variables corresponding to the binary atoms on $(e_i,e_j)$. 
\cref{alg:two-stage-algorithm} induces a variable order that respects the block order $B_1, \dots, B_n, B_{12}, \dots, B_{(n-1)n}$.
Further note that the variables in each block are only involved in the branching at one stage (Stage I for $B_i$ and Stage II for $B_{ij}$) and are not branched on at any other stage.
If we view the branching on a block as a single branching step, then the resulting circuit is essentially an OBDD with respect to the block order.
% Inside each block, we fix an arbitrary order of the variables.
% Then we have the following corollary on the output circuit.

\begin{proposition}
\label{cor:OBDD}
The circuit output by Algorithm~\ref{alg:two-stage-algorithm} can be transformed into an OBDD respecting the variable order induced by the above block order, with only a constant-factor size overhead.
\end{proposition}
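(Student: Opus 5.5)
The plan is to replace, node by node, every block-branching $\vee$-node of the circuit output by Algorithm~\ref{alg:two-stage-algorithm} with a small decision tree over the variables of that block. The variables inside each block are fixed in an arbitrary internal order. This gives a total variable order $\prec$ that refines the block order $B_1,\dots,B_n,B_{12},\dots,B_{(n-1)n}$. The point to exploit is that the sentence $\sentence$ is fixed, so every block has a constant number of variables. A block $B_i$ contains one variable for each unary predicate and one for each diagonal atom $R(e_i,e_i)$. A block $B_{ij}$ contains $2$ variables for each binary predicate. Hence $|\utypes|$ and every $|\btypes(\tau,\tau')|$ are bounded by a constant $c_\sentence$ that does not depend on $n$.

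First, I will view each recursive call as a node $v$. Such a node is associated with a block $B$, namely $B_i$ in \textsc{Stage-I}$(i,\theta)$ or $B_{ij}$ in \textsc{Stage-II}$((i,j),\theta)$. Its output is $\bigvee_{t\in T_v} \big(t \wedge C_t\big)$, where $T_v$ is the set of surviving types. Each $t$ is a complete assignment to $B$, and each $C_t$ is the node of a call on a strictly later block. For each such $v$, I will build a decision tree $D_v$ that tests the variables of $B$ in order $\prec$. The leaf reached by the assignment $t$ points to the OBDD node for $C_t$, and every other leaf is $\bot$. Whenever a path through $D_v$ never fixes a variable of $B$, I will insert a redundant test node for it, or equivalently use the standard zero-suppression-free convention. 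This keeps the order constant along every path. The base case $\top$ of \textsc{Stage-II} is a terminal.

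Next, I will check that the result is an OBDD. Along any root-to-terminal path, the blocks are visited in strictly increasing block order, because every recursive call advances to $i+1$ or to $\text{next}(i,j)$. Inside a block, the variables are tested in $\prec$ order. So the path respects $\prec$ and no variable is tested twice. For equivalence, I will prove by induction from the terminal upward that $D_v$ computes the same function as $v$. This holds because $D_v$ accepts exactly the assignments of $B$ equal to some $t\in T_v$ and then continues into $C_t$. Shared subcircuits are handled by mapping each original call node to a single $D_v$, so the DAG sharing is preserved.

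For the size bound, each $D_v$ is a tree of depth $|B|\le c$ with at most $|T_v|\cdot|B|$ internal nodes on the relevant paths, plus $\bot$ leaves. The original node $v$ already contributes at least $|T_v|$ edges, namely the $\wedge$-nodes with their literal leaves. So the ratio is bounded by a constant depending only on $\sentence$.

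The main subtlety I expect is that in the original circuit the literal leaves are shared across branches, so $|C|$ may already count them compactly. I will argue instead that the overhead per $\vee$-node is $O(|T_v|\cdot |B|)=O(1)\cdot|T_v|$, which is still at most a constant times the out-degree of that $\vee$-node in $C$.
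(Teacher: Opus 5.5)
Your proposal is correct and follows essentially the same route as the paper. Each call's output $\bigvee_{\alpha}(\chi_\alpha\wedge C_\alpha)$ is replaced by a constant-size decision fragment over its block, tested in a fixed within-block order. Each surviving type is routed to the OBDD for $C_\alpha$ and every other assignment to $0$, and the constant overhead comes from the block size depending only on $\sentence$. Your per-$\vee$-node accounting, with overhead $O(|T_v|\cdot|B|)$ charged to that node's out-degree, is a more explicit version of the paper's size argument. Inserting redundant test nodes is harmless but unnecessary: every type is a complete assignment to its block, so only paths to $0$ skip variables.
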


\begin{proof}
For convenience, we denote the block order induced by~\cref{alg:two-stage-algorithm} as $b_1,\dots,b_N$, and fix an arbitrary order of the variables inside each block.
We call a recursive call (either \textsc{Stage-I} or \textsc{Stage-II}) being at level $t$ if it processes block $b_t$, and the terminal calls are at level $N+1$.
By a backward induction on $t$, one can easily show that any subcircuit returned by a call at level $t$ mentions only variables in $b_t,\dots,b_N$.
% , since the terminal calls return constants and the non-terminal calls only branch on types over the current block $b_t$ before recursively processing later blocks.

Consider a call at level $t$.  
The returned subcircuit must be in the form $\bigvee_{\alpha\in A_t}\bigl(\chi_\alpha\wedge C_\alpha\bigr)$,
where $A_t$ is the set of retained valid types for the current block $b_t$, $\chi_\alpha$ is the conjunction of literals specifying the type $\alpha$, and each $C_\alpha$ is the subcircuit returned by the recursive call for the next block $b_{t+1}$ under the context extended by $\chi_\alpha$.
Any such subcircuit $\bigvee_{\alpha\in A_t}\bigl(\chi_\alpha\wedge C_\alpha\bigr)$ can be transformed into an OBDD fragment that tests the variables of $b_t$ in the fixed within-block order: assignments corresponding to $\alpha$ are routed to the OBDD for $C_\alpha$, and all other assignments are routed to $0$.  
Applying this transformation to the subcircuit returned by each call at level $t$ for $t=N,\dots,1$ results in an OBDD respecting the variable order induced by the block order.

Finally, we analyze the size overhead of this transformation. 
Since the sentence $\sentence$ is fixed, the variable number in each block is bounded by a constant.  
Therefore each subcircuit $\bigvee_{\alpha\in A_t}\bigl(\chi_\alpha\wedge C_\alpha\bigr)$ can be transformed into an OBDD fragment of size at most a constant factor of the original subcircuit, and thus the overall size overhead is only a constant factor.
\end{proof}

% Since OBDDs are structured d-DNNFs with respect to linear vtrees, 
Since an OBDD respecting a variable order can be represented as a structured d-DNNF respecting the corresponding linear vtree, 
\cref{cor:OBDD} further implies that the output circuit admits a structured d-DNNF with constant-factor overhead for fixed sentence $\sentence$.
\cref{fig:structured-d-DNNF-OBDD} illustrates the structured d-DNNF and OBDD of the example in \cref{fig:or-node-types}.

\begin{figure}[t]
\centering
\begin{subfigure}{0.45\textwidth}
\centering
\begin{tikzpicture}[
    op/.style={draw, circle, inner sep=1.2pt, minimum size=4.3mm},
    leaf/.style={draw, rounded corners, inner sep=2pt, font=\footnotesize}
]

% Root
\node[op] (root) at (0,0) {$\vee$};

% First layer
\node[op] (a1) at (-0.5,-0.7) {$\wedge$};
\node[op] (b1) at ( 0.5,-0.7) {$\wedge$};

\draw (root) -- (a1);
\draw (root) -- (b1);

% Second layer
\node[leaf] (R)  at (-1.35,-1.55) {$R(e_i)$};
\node[op]   (a2) at (-0.5,-1.55) {$\wedge$};

\node[op]   (b2) at ( 0.5,-1.55) {$\wedge$};
\node[leaf] (nR) at ( 1.45,-1.55) {$\neg R(e_i)$};

\draw (a1) -- (R);
\draw (a1) -- (a2);

\draw (b1) -- (b2);
\draw (b1) -- (nR);

% Third layer
\node[leaf] (nB) at (-1.45,-2.45) {$\neg B(e_i)$};
\node[op]   (a3) at (-0.5,-2.45) {$\wedge$};

\node[op]   (b3) at ( 0.5,-2.45) {$\wedge$};
\node[leaf] (B)  at ( 1.35,-2.45) {$B(e_i)$};

\draw (a2) -- (nB);
\draw (a2) -- (a3);

\draw (b2) -- (b3);
\draw (b2) -- (B);

% Fourth layer
\node[op]   (nextL) at (-1.35,-3.3) {$\vee$};
\node[leaf] (nE)    at ( 0.00,-3.3) {$\neg E(e_i,e_i)$};
\node[op]   (nextR) at ( 1.35,-3.3) {$\vee$};

\draw (a3) -- (nextL);
\draw (a3) -- (nE);

\draw (b3) -- (nE);
\draw (b3) -- (nextR);

\end{tikzpicture}
\caption{Structured d-DNNF}
\label{fig:structured-d-DNNF}
\end{subfigure}
\hfill
\begin{subfigure}{0.45\textwidth}
\centering
\begin{tikzpicture}[
    x=1cm, y=1cm, >=latex,
    dnode/.style={
        draw,
        rounded corners=2pt,
        inner xsep=2.5pt,
        inner ysep=1.5pt,
        minimum height=4.5mm,
        font=\footnotesize
    },
    term/.style={
        draw,
        rounded corners=2pt,
        inner sep=2pt,
        minimum width=5mm,
        font=\footnotesize
    },
    sub/.style={
        draw,
        rounded corners=2pt,
        inner sep=2pt,
        minimum width=7mm,
        font=\footnotesize
    },
    elab/.style={
        midway,
        fill=white,
        inner sep=1pt,
        font=\scriptsize
    }
]

% Decision nodes
\node[dnode] (R)  at (0,0) {$R(e_i)$};

\node[dnode] (BL) at (-1.25,-1.05) {$B(e_i)$};
\node[dnode] (BR) at ( 1.25,-1.05) {$B(e_i)$};

\node[dnode] (EL) at (-1.25,-2.25) {$E(e_i,e_i)$};
\node[dnode] (ER) at ( 1.25,-2.25) {$E(e_i,e_i)$};

% Bottom nodes
\node[sub]  (CR) at (-1.25,-3.45) {$C_R$};
\node[term] (Z)  at ( 0.00,-3.45) {$0$};
\node[sub]  (CB) at ( 1.25,-3.45) {$C_B$};

% Edges from R
\draw[->] (R) -- node[elab] {$1$} (BL);
\draw[->] (R) -- node[elab] {$0$} (BR);

% Edges from B nodes
\draw[->] (BL) -- node[elab] {$0$} (EL);
\draw[->] (BR) -- node[elab] {$1$} (ER);

\coordinate (ZL) at ($(Z.north west)!0.7!(Z.north)$);
\coordinate (ZR) at ($(Z.north)!0.3!(Z.north east)$);

\coordinate (BLtoZ) at ($(ZL)+(0,1.2)$);
\coordinate (BRtoZ) at ($(ZR)+(0,1.2)$);

\draw[->] (BL.south east) -- node[elab] {$1$} (BLtoZ) -- (ZL);
\draw[->] (BR.south west) -- node[elab] {$0$} (BRtoZ) -- (ZR);

% Edges from E nodes
\draw[->] (EL) -- node[elab] {$0$} (CR);
\draw[->] (EL) -- node[elab] {$1$} (Z);

\draw[->] (ER) -- node[elab] {$0$} (CB);
\draw[->] (ER) -- node[elab] {$1$} (Z);

\end{tikzpicture}
\caption{OBDD}
\label{fig:OBDD}
\end{subfigure}
\caption{The structured d-DNNF and OBDD of the example in \cref{fig:or-node-types}.}
\label{fig:structured-d-DNNF-OBDD}
\end{figure}

\subsection{Extendability Checks}
\label{sec:compatibility}

At each branching step one must test whether the current context $\theta$, which records the type assignments made so far, can be extended to a full model of $\sentence$ over $\domain$ of size $n$.
We formalize this as follows.
\begin{definition}[Extendability]
  Let $\sentence$ be an \fotwo sentence and let $\theta$ be a context during the compilation over a domain $\domain$. 
  We say that $\theta$ is \emph{extendable} w.r.t. $(\sentence, \domain)$ if there exists a model $\structure$ over $\domain$ such that $\structure \models \sentence \wedge \theta$.
\end{definition}

Intuitively, one can do the extendability check by reducing it to a satisfiability test of the propositional grounding of $(\sentence, \domain)$ with the additional constraint $\theta$.
However, by this method, we will lose the symmetry present at the first-order level, and more importantly, such tests are generally intractable from the $\mathrm{NP}$-completeness of SAT problems.
Instead, we develop a more efficient extendability test by exploiting the symmetry of \fotwo models following the approach of \cite{meng2025model}.
We first need the notion of \emph{configuration}~\cite{beame2015symmetric}.
% We denote by $q$ the number of unary types, i.e., $q=|\utypes|$.

\begin{definition}[Configuration]
Fix an $\fotwo$ sentence $\sentence$, and let $\utypes$ be the set of valid unary types with an arbitrary fixed ordering.
Let $q=|\utypes|$.
A \emph{configuration} is a vector $\vecn=(n_1,\dots,n_q)\in \mathbb{N}^q$, where $n_i$ counts the number of elements realizing the $i$-th unary type.
% We write $|\vecn|=\sum_i n_i$ for the induced domain size.
\end{definition}

A configuration $\vecn$ is \emph{satisfiable} if there exists a model of $\sentence$, where the number of elements realizing the $i$-th unary type is exactly $n_i$ for all $i$.
It turns out that the satisfiability of configurations has a monotonicity property, which allows us to compare configurations using a partial order and characterize satisfiable configurations via finitely many bounded ones.
\begin{definition}
\label{def:derivation}
Fix an $\fotwo$ sentence $\sentence$.
For its configurations $\vecn,\vecn'\in \mathbb{N}^q$, we write $\vecn \preceq \vecn'$ iff for all $i \in [q]$,
$n_i = 0 \Rightarrow n_i'=0,$ and $n_i > 0 \Rightarrow n_i \le n_i'.$
\end{definition}
% Two key properties hold for satisfiable configurations of $\fotwo$ sentences.
% First, satisfiability is monotone w.r.t.\ $\preceq$, that is, if $\vecn$ is satisfiable and $\vecn \preceq \vecn'$, then $\vecn'$ is also satisfiable.
% Second, there exists an upper bound on the components of satisfiable configurations.
% Formally, let $m$ be the number of existential constraints $\forall x\exists y:\beta_k(x,y)$ in $\sentence$, and define
% $$
% \delta_\sentence = \max\{m(m+1),2m+1\}.
% $$
% Then for a satisfiable configuration $\vecn$ with $n_i > \delta_\sentence$,
% the configuration $\vecn[i \mapsto \delta_\sentence]$, which is identical to $\vecn$ except the $i$-th entry replaced by $\delta_\sentence$, is also satisfiable.
% Together, monotonicity and boundedness yield the following characterization.
% Then each satisfiable configuration can be derived from a bounded satisfiable configuration, as stated in the following theorem.
\begin{theorem}[{\protect\cite[Theorem~3]{meng2025model}}]
\label{th:cfg-template}
Fix an $\fotwo$ sentence $\sentence$ with $m$ existential conjuncts in Scott normal form.
A configuration $\vecn$ is satisfiable iff there exists a satisfiable configuration $\vecn'\in\{0,1,\dots,\delta\}^{q}$ such that $\vecn' \preceq \vecn$, where $\delta = \max\{m(m+1),2m+1\}$.
\end{theorem}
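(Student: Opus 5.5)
The statement has two directions, and the plan is to prove each from two structural lemmas: \emph{monotonicity} and \emph{boundedness}.

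Monotonicity says: if $\vecn$ is satisfiable and $\vecn\preceq\vecn'$, then $\vecn'$ is satisfiable. This immediately gives the ``if'' direction, since a satisfiable bounded $\vecn'\preceq\vecn$ lifts to $\vecn$.

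Boundedness says: if $\vecn$ is satisfiable and some $n_i>\delta$, then $\vecn[i\mapsto\delta]$ is satisfiable. For the ``only if'' direction, apply this repeatedly to every coordinate exceeding $\delta$. Each step preserves satisfiability and zero patterns, so we reach some $\vecn'\in\{0,\dots,\delta\}^q$. We have $\vecn'\preceq\vecn$ because positive entries only decrease and stay positive, using $\delta\ge 1$ (when $m=0$ we get $\delta=1$).

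\textbf{Monotonicity.} It suffices to increase one positive coordinate $n_i$ by one. Take a model $\structure$ of $\sentence$ and an element $a$ of type $\tau_i$, and add a fresh element $a'$ as a clone of $a$.
\begin{itemize}
\item Give $a'$ unary type $\tau_i$.
\item For every $b\neq a,a'$, copy the binary type of $(a,b)$ onto $(a',b)$. This satisfies $\phi$ and provides $a'$ with the same witnesses as $a$, including witnesses for $b$ if $a$ served as one.
\item For the pair $(a,a')$, choose the binary type of $(a,w)$ with $w$ a witness.
\end{itemize}
The delicate case is when $a$ is its own witness for some $\beta_k$, or $a$'s witness was the diagonal. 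Here I would use the standard trick of giving $(a,a')$ a binary type that mirrors the diagonal, so that $\beta_k(a,a')$ holds, while the universal constraint stays satisfied. It remains to check that such a binary type is valid. This follows because $\phi(x,y)$ evaluated on two copies with the ``diagonal-like'' binary type reduces to $\phi(x,x)$ on the same type. I expect this detail to need care, and if needed I would instead pick the pair type as a copy of some existing pair $(a,b)$ with $b$ of the same unary type.

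\textbf{Boundedness.} This is the main obstacle. The plan is the classical Grädel--Kolaitis--Vardi style surgery on a model $\structure$ with many elements of type $\tau_i$.
\begin{itemize}
\item Call an element a \emph{king} if it is the unique realizer of its unary type. Select, for each king and each existential conjunct, its witness; this gives at most $m$ chosen witnesses per king, and a ``court'' of $O(m)$ elements per relevant type.
\item Retain $\delta$ elements of type $\tau_i$, chosen to include the court elements of that type, and delete the rest.
\item The surviving elements of other types may have lost their witnesses. Repair this by redirecting: pick a retained element $c$ of type $\tau_i$ that is not itself needed, and copy onto $(d,c)$ the binary type that $(d,\text{old witness})$ had. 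Since both have the same unary type, $\phi$ is preserved.
\item Retained $\tau_i$-elements whose witnesses were deleted $\tau_i$-elements are handled the same way.
\item The redirections must not clash. I would split the retained $\tau_i$-elements into $m+1$ groups (or use a $3$-coloring/cyclic assignment) so that each element can be given witnesses from a different group than itself, which avoids needing self-pairs and avoids conflicting assignments of one pair.
\end{itemize}
Counting the groups times the number of existential conjuncts gives the bound $\max\{m(m+1),2m+1\}$. The $2m+1$ term handles the case where a $\tau_i$-element must witness itself-type requirements. The careful bookkeeping to show each pair $(d,c)$ is redirected for at most one purpose is where the proof's effort concentrates.
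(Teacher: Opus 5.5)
Your decomposition into monotonicity plus the one-coordinate truncation $\vecn[i\mapsto\delta]$ is the right one, and it matches the decomposition the paper alludes to; the paper itself imports the theorem from the cited work without proof. Monotonicity works once you give $(a,a')$ the diagonal-mirroring type. Without equality, $\phi(x,y)$ under that type evaluates exactly as $\phi(x,x)$ under $\tau_i$, so the type lies in $\btypes(\tau_i,\tau_i)$. The pair $(a,a')$ never needs to supply a witness, because the diagonal atoms $\beta_k(x,x)$ belong to $\tau_i$ and $a'$ inherits them, so your ``delicate case'' is vacuous. Your first choice, the type of $(a,w)$, is in general not valid for $(\tau_i,\tau_i)$. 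Your fallback of copying some $(a,b)$ with $b$ of type $\tau_i$ fails when $n_i=1$.

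The genuine gap is in the boundedness step. You retain the court, i.e.\ the $\tau_i$-witnesses of all kings. The number of kings is bounded only by the number $q$ of unary types, so the court can contain up to $m$ times that many elements of type $\tau_i$. For example, with $m=1$ (so $\delta=3$) and ten kings each having a distinct $\tau_i$-witness, ``retain $\delta$ elements including the court'' is impossible. Your construction therefore yields at best a Gr\"adel--Kolaitis--Vardi style bound depending on $q$, not $\delta=\max\{m(m+1),2m+1\}$, which depends on $m$ alone.

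Dropping the court does not rescue the argument as written, because the redirection step has an unaddressed conflict. Overwriting the pair $\{d,c\}$ with the full type of $\{d,w\}$ also overwrites the direction $(c,d)$, and this can destroy $c$'s own chosen witness whenever $c$ relied on $d$. If $d$ is relied upon by (almost) all retained $\tau_i$-elements, there may be no admissible $c$ at all.

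What is missing is an argument that controls such heavily relied-upon partners by $m$ alone. One way in: each retained $\tau_i$-element has at most $m$ chosen witnesses. A double count then shows that, with $m(m+1)$ retained elements, fewer than $m+1$ partners can be relied upon by so many of them that fewer than $m$ unprotected ones remain. Only for these few partners do you need to keep original witnesses, at most $m^2$ elements in total. The choice of retained set and the popularity count then have to be made consistent.

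The $\tau_i$-to-$\tau_i$ witness obligations also need an explicit conflict-free scheme. For example, take a regular tournament on $2m+1$ elements in which each element takes its $m$ witnesses among its $m$ successors; this is what the $2m+1$ term pays for. As it stands, ``groups times conjuncts gives the bound'' is read off the statement rather than derived. The bookkeeping you defer is exactly the content of the lemma.
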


We note that $\delta$ and $q$ depend only on the input sentence $\sentence$.
Thus, to check whether a configuration $\vecn$ of size $n$ is satisfiable, it suffices to compare it with finitely many template configurations $\vecn'$ with bounded entries, which can be precomputed once for the fixed input sentence.
Next we show how to use this to perform extendability checks in our two-stage compilation efficiently.
% The high-level idea is that we maintain the unary type counts of the processed elements and check whether they can be extended to a satisfiable configuration of total size $n$.
% In our compilation, we maintain partial unary type counts and repeatedly ask whether they can be extended to a full satisfiable configuration. 
% By Theorem~\ref{th:cfg-template}, this reduces to checking derivability from one of finitely many template configurations, which can be precomputed once for the fixed input sentence.

\subsubsection{Checks in Stage~I}

We maintain a \emph{partial configuration} $\hat{\vecn}\in\mathbb{N}^{q}$ according to the context $\theta$, where $\hat{n}_i$ counts how many processed elements are assigned the $i$-th unary type in $\theta$.
Then we have the following lemma, whose proof is relatively straightforward and thus provided in the appendix.

\begin{lemma}
\label{lem:partial-cfg-extend}

Let $\theta$ be a context during Stage~I, and let $\hat{\vecn}$ be the partial configuration induced by $\theta$.
Then $\theta$ is extendable w.r.t.\ $(\sentence,\domain)$ iff there exists a satisfiable configuration $\vecn$ of $\sentence$ such that $|\vecn|=|\domain|$, and for all $i \in [q]$, $n_i \ge \hat{n}_i$.

% For a context $\theta$ and its induced partial configuration $\hat{\vecn}$, $\theta$ is extendable w.r.t.\ $(\sentence,\domain)$ iff there exists a satisfiable configuration $\vecn$ of $\sentence$ such that $|\vecn|=|\domain|$, and for all $i \in [q]$, $n_i \ge \hat{n}_i$.
\end{lemma}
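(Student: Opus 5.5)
We prove the two directions separately. The key structural fact is that during Stage~I the context $\theta$ has the form $\bigwedge_{l<i}\tau_{l}(e_l)$: it fixes the unary types of the processed elements $e_1,\dots,e_{i-1}$ and imposes nothing else. We also use that satisfiability of a configuration depends only on the counts and not on which elements carry which type. This holds because $\sentence$ is constant-free, so its models are closed under permutations of the domain.

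\emph{Only if.} Suppose $\structure\models\sentence\wedge\theta$ over $\domain$. Let $\vecn$ be the configuration of $\structure$, where $n_k$ is the number of elements of $\domain$ realizing the $k$-th valid unary type. Every element realizes some valid unary type, because $\structure\models\forall x\forall y:\phi(x,y)$ forces $\phi(a,a)$ for each $a$. Hence $|\vecn|=|\domain|$. The configuration $\vecn$ is satisfiable, witnessed by $\structure$. Since $\structure\models\theta$, each processed element $e_l$ realizes $\tau_l$. So the elements counted in $\hat n_k$ are among those counted in $n_k$, which gives $n_k\ge\hat n_k$.

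\emph{If.} Let $\vecn$ be satisfiable with $|\vecn|=n$ and $n_k\ge\hat n_k$ for all $k$. Take a model $\structure'$ of $\sentence$ witnessing $\vecn$. Its domain has size $|\vecn|=n$, so we may assume it lies on $\domain$ after renaming via a bijection; renaming preserves models because $\sentence$ is constant-free. In $\structure'$, the type-$k$ elements form a set $D_k$ with $|D_k|=n_k$. Let $P_k\subseteq\{e_1,\dots,e_{i-1}\}$ be the processed elements assigned type $k$ by $\theta$, so $|P_k|=\hat n_k\le n_k$. We choose a permutation $\sigma$ of $\domain$ that maps $P_k$ into $D_k$ for every $k$. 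This is possible since the $P_k$ are disjoint and each satisfies $|P_k|\le|D_k|$; we map them injectively, then extend $\sigma$ to a bijection, which works because both sides have total size $n$. Define $\structure$ by pulling $\structure'$ back along $\sigma$: a ground atom $R(\bar a)$ holds in $\structure$ iff $R(\sigma(\bar a))$ holds in $\structure'$. Then $\structure$ is isomorphic to $\structure'$ and so satisfies $\sentence$. Each processed $e_l\in P_k$ realizes type $k$ in $\structure$, since $\sigma(e_l)\in D_k$. Therefore $\structure\models\theta$.

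The only step needing care is this permutation construction, together with the observation that Stage~I contexts contain no binary-atom constraints. It is this observation that lets the types of the processed elements be realized by relabeling, without any further constraint; the rest is bookkeeping.
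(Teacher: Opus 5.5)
Your proof is correct and takes essentially the same route as the paper. One direction reads off the configuration of a witnessing model; for the other, you take a model realizing $\vecn$ and rename its domain so that each processed element lands on an element of its prescribed type. The paper phrases this as first extending $\theta$ to a full unary-type assignment with counts exactly $\vecn$ and then renaming. You also make explicit that every element realizes a valid unary type, which gives $|\vecn|=|\domain|$ and is left implicit in the paper.
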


By the above lemma, the extendability test reduces to a test on the partial configuration $\hat{\vecn}$, which can be solved by Theorem~\ref{th:cfg-template}.

\begin{proposition}
\label{prop:extendability-stage-1}
Let $\theta$ be a context during Stage I, and let $\hat{\vecn} = (\hat{n}_1, \dots, \hat{n}_{q})$ be the induced partial configuration.
Then $\theta$ is extendable w.r.t.\ $(\sentence,\domain)$ iff there exists a satisfiable configuration $\vecn'\in\{0,1,\dots,\delta\}^{q}$ such that for all $i \in [q]$, if $n'_i = 0$ then $\hat{n}_i = 0$, and $|\domain| - |\hat{\vecn}| \ge \sum_{i \in [q]} \max\{n'_i - \hat{n}_i, 0\}$.
\end{proposition}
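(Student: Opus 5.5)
The plan is to chain Lemma~\ref{lem:partial-cfg-extend} with Theorem~\ref{th:cfg-template}. By the lemma, $\theta$ is extendable iff there is a satisfiable configuration $\vecn$ with $|\vecn|=|\domain|$ and $n_i\ge\hat n_i$ for all $i$. By Theorem~\ref{th:cfg-template}, such an $\vecn$ is satisfiable iff some satisfiable template $\vecn'\in\{0,\dots,\delta\}^q$ satisfies $\vecn'\preceq\vecn$. The statement to prove is therefore equivalent to the following claim about integer vectors. Given $\hat{\vecn}$ and $n=|\domain|$, and a fixed $\vecn'$, there exists $\vecn\in\mathbb N^q$ with
\[
|\vecn|=n,\qquad \vecn\ge\hat{\vecn},\qquad \vecn'\preceq\vecn
\]
if and only if (a) $n'_i=0\Rightarrow \hat n_i=0$, and (b) $n-|\hat{\vecn}|\ge\sum_i\max\{n'_i-\hat n_i,0\}$. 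I will prove both directions of this claim for a single template and then quantify over templates.

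For ($\Rightarrow$), take such $\vecn$ and $\vecn'$. If $n'_i=0$, then $n_i=0$ by the definition of $\preceq$, so $\hat n_i\le n_i=0$, which gives (a). For every $i$ we have $n_i\ge\hat n_i$, and for $n'_i>0$ also $n_i\ge n'_i$. Hence $n_i\ge\max\{\hat n_i,n'_i\}=\hat n_i+\max\{n'_i-\hat n_i,0\}$. This inequality also holds when $n'_i=0$, since then the correction term is zero. Summing over $i$ and using $|\vecn|=n$ yields (b).

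For ($\Leftarrow$), define $m_i=\max\{\hat n_i,n'_i\}$. By (b), $|\mathbf m|\le n$. Let $s=n-|\mathbf m|\ge0$; this slack must go to a coordinate allowed to be nonzero. I add $s$ to some coordinate $i^*$ with $n'_{i^*}>0$, and call the result $\vecn$. Then $\vecn\ge\hat{\vecn}$ and $|\vecn|=n$. For $n'_i>0$ we have $n_i\ge n'_i$. For $n'_i=0$, condition (a) gives $\hat n_i=0$, so $m_i=0$; these coordinates were not touched, so $n_i=0$. Hence $\vecn'\preceq\vecn$, and $\vecn$ is satisfiable by Theorem~\ref{th:cfg-template}. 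The lemma then gives extendability.

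The main obstacle is the edge case in which $\vecn'$ has no positive coordinate. Then $\vecn'=\mathbf 0$, and by (a) also $\hat{\vecn}=\mathbf 0$, so no coordinate can absorb the slack. If $n\ge1$ this is consistent only when $s=0$, which is impossible, so I need to exclude this case. One way is to argue that $\mathbf 0$ can be discarded as a template: it can only certify the empty domain, because $\vecn'\preceq\vecn$ with $\vecn'=\mathbf 0$ forces $\vecn=\mathbf 0$. I would therefore restrict the templates to nonzero ones when $n\ge1$, noting that any $\vecn'\preceq\vecn$ with $|\vecn|\ge1$ is automatically nonzero, so the $\Rightarrow$ direction is unaffected. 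When $n=0$, both sides hold trivially with $\vecn'=\mathbf 0$.
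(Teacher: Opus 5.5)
Your proposal is correct and is essentially the paper's proof. You chain Lemma~\ref{lem:partial-cfg-extend} with Theorem~\ref{th:cfg-template}, read off (a) and (b) from $\vecn'\preceq\vecn$ in the forward direction, and in the converse pad $\max\{\hat n_i,n'_i\}$ up to size $|\domain|$; placing the slack on a coordinate with $n'_{i^*}>0$ and isolating the all-zero template make explicit what the paper's ``increase some entries \ldots by monotonicity'' glosses over, though rather than restricting templates by choice you can simply note that with nonempty domains $\mathbf 0$ is never a satisfiable configuration (and every context the algorithm actually tests has $\hat{\vecn}\neq\mathbf 0$, so (a) already excludes it).
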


\begin{proof}
  ($\Rightarrow$) Suppose $\theta$ is extendable w.r.t.\ $(\sentence,\domain)$.
  By Lemma~\ref{lem:partial-cfg-extend}, there exists a satisfiable configuration $\vecn$ such that $|\vecn|=|\domain|$ and for all $i$, $n_i \ge \hat{n}_i$.
  By Theorem~\ref{th:cfg-template}, there exists a satisfiable configuration $\vecn'\in\{0,1,\dots,\delta\}^{q}$ such that $\vecn' \preceq \vecn$.
  Then for all $i$, if $n'_i = 0$ then $n_i = 0$ and thus $\hat{n}_i = 0$, and if $n'_i > 0$ then $n_i \ge n'_i$ and thus $n_i \ge \max\{\hat{n}_i, n'_i\}$.
  Therefore, $|\domain| - |\hat{\vecn}| = \sum_i (n_i - \hat{n}_i) \ge \sum_i \max\{n'_i - \hat{n}_i, 0\}$.

  ($\Leftarrow$) Suppose there exists a satisfiable configuration $\vecn'\in\{0,1,\dots,\delta\}^{q}$ that meets the stated conditions.
  Let $\vecn^*$ be the configuration defined by $n^*_i = \max\{\hat{n}_i, n'_i\}$ for all $i$.
  Then for all $i$, if $n'_i = 0$ then $\hat{n}_i = 0$ and thus $n^*_i = 0$, and if $n'_i > 0$ then $n^*_i \ge n'_i$.
  Therefore, $\vecn' \preceq \vecn^*$, and by Theorem~\ref{th:cfg-template}, $\vecn^*$ is satisfiable.
  Moreover, $|\vecn^*| = \sum_i n^*_i = \sum_i \max\{\hat{n}_i, n'_i\} = |\hat{\vecn}| + \sum_i \max\{n'_i - \hat{n}_i, 0\} \le |\domain|$.
  We can extend $\vecn^*$ to a configuration $\vecn$ of size $|\domain|$ by increasing some entries, which remains satisfiable by monotonicity.
  Finally, by Lemma~\ref{lem:partial-cfg-extend}, $\theta$ is extendable w.r.t.\ $(\sentence,\domain)$.
\end{proof}

% \lnote{@Qiaolan, please check this}
% Specifically, for a partial configuration $\hat{\vecn}$, we can check whether there exists a template configuration $\bar{\vecn}$ such that $\bar{n}_i=\hat{n}_i$ for all $i$ with $\bar{n}_i\le 1$, and
% $$
% \sum_{i: \bar{n}_i>1} \max\{\bar{n}_i-\hat{n}_i,0\}\ \le\ n-\sum_{i}\hat{n}_i.
% $$
% This condition ensures that $\hat{\vecn}$ can be extended to a template configuration $\bar{\vecn}$ by increasing some of its entries, and then $\bar{\vecn}$ can be extended to a satisfiable configuration $\vecn$ with $|\vecn|=n$ by Theorem~\ref{th:cfg-template}.
% Since the finite set of template configurations depends only on $\sentence$, we precompute it once and perform the above test by iterating over all template configurations, in $O(\log n)$ time.

\subsubsection{Checks in Stage II}

We now consider extendability checks during Stage II. 
Recall that, at this point, the unary types of all elements have already been fixed in Stage I, and Stage II only branches on binary
types. 
More precisely, when processing a pair $(e_i,e_j)$ in lexicographic order, the current context $\theta$ contains the unary type assignments for all elements in $\domain$, the binary
types for all pairs involving elements in $\{e_1,\dots,e_{i-1}\}$, and the binary types for pairs $(e_i,e_\ell)$ with $\ell<j$. 
Hence, unlike Stage I, extendability is no longer determined solely
by the numbers of elements of each unary type: we must also keep track of which binary types have already been fixed and which existential requirements are still waiting for witnesses.

To encode this information, we follow~\cite{meng2025model} and reduce the extendability of a context in Stage~II to the satisfiability of a configuration for a fixed auxiliary \fotwo sentence $\sentence_{\mathrm{aux}}$. 
Concretely, for a context $\theta$ in the call $\textsc{Stage-II}((i,j),\theta)$, we consider the remaining domain $\Delta'=\{e_i,\dots,e_n\}$ and enrich it with fresh auxiliary predicates $T,Q,D,Z_1,\dots,Z_m$, where $T,Q,D$ encode which pairs have already been processed and
$Z_k$ records which elements are not yet $\beta_k$-satisfied.
This yields an induced complete configuration for the sentence
$\sentence_{\mathrm{aux}}$. 
The details of the fresh predicates, the auxiliary sentence, and the induced configuration are given in the appendix.
We rephrase the result of~\cite{meng2025model} in our terminology as follows.
% The context $\theta$ consists of a fixed unary type assignment and the ground binary types chosen for element pairs processed so far.
% Recall that we process pairs in lexicographic order of indexes, so when processing a pair $(e_i,e_j)$, the $\theta$ consists of the unary type assignments for all elements $e_1,\dots,e_n$, the binary types for all pairs involving elements $\{e_1,\dots,e_{i-1}\}$, and the binary types for pairs $(e_i,e_\ell)$ with $\ell < j$.
% % In particular, the all atoms involving elements $\{e_1, \dots, e_{i-1}\}$ are already fixed, and all binary types between $\{e_{i+1},\dots,e_n\}$ are still unassigned in $\theta$.
% This specific form of context is analogous to the \emph{substructure} considered in~\cite{meng2025model}, where they showed that the satisfiability check for such substructures (equivalent to our extendability check) can be reduced to a configuration satisfiability test of an auxiliary sentence over a smaller domain.
% We rephrase their result in our terminology as follows.
\begin{proposition}[Rephrased from {\protect\cite[Proposition~4]{meng2025model}}]
  \label{prop:substructure-extend}
  Let $\theta$ be a context during Stage II when processing the pair $(e_i,e_j)$.
  Then there exists an auxiliary \fotwo sentence $\sentence_{\mathrm{aux}}$ independent of $\domain$, such that the extendability check of $\theta$ w.r.t.\ $(\sentence,\domain)$ can be reduced to a configuration satisfiability test of $\sentence_{\mathrm{aux}}$ over the domain $\{e_i,\dots,e_n\}$.
\end{proposition}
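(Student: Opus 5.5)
We will prove the proposition by an explicit construction of $\sentence_{\mathrm{aux}}$ and a bijection-up-to-projection between extensions of $\theta$ and models of $\sentence_{\mathrm{aux}}$ over $\domain'=\{e_i,\dots,e_n\}$.

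\emph{Summarizing the context.} First observe what $\theta$ fixes. Every atom involving some element of $\{e_1,\dots,e_{i-1}\}$ is already determined, so these elements and their incident pairs play no further role, except through the witnesses they provide. For each $e_\ell$ with $\ell\ge i$ and each $k\in[m]$, we can decide from $\theta$ whether $\exists y\,\beta_k(e_\ell,y)$ is already witnessed by a pair whose binary type is fixed. The relevant pairs are $(e_\ell,e_{\ell'})$ with $\ell'<i$, together with $(e_i,e_{\ell'})$ for $\ell'<j$ when $\ell=i$, and the symmetric pairs when $e_\ell$ is the second component. We therefore introduce unary predicates $Z_k$ with $Z_k(e_\ell)$ true iff $e_\ell$ still needs a $\beta_k$-witness. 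For $e_1,\dots,e_{i-1}$, all requirements must already be met; otherwise $\theta$ is trivially non-extendable, and we reject immediately.

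The only remaining asymmetry inside $\domain'$ is that $e_i$ has already fixed its pairs with $e_{i+1},\dots,e_{j-1}$. We encode this with unary predicates $T$ (marking $e_i$), $Q$ (marking $e_{i+1},\dots,e_{j-1}$) and $D$ (marking $e_j,\dots,e_n$). For each $e_\ell$ in $Q$, the binary type already chosen with $e_i$ matters only through the witnesses it supplies, which we record by refining the unary type with the corresponding $Z_k$ bits.

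\emph{The auxiliary sentence.} We then define
\begin{align*}
\sentence_{\mathrm{aux}} \;=\; \forall x\forall y:\ \bigl(\phi(x,y)\wedge \chi(x,y)\bigr)\ \wedge\ \bigwedge_{k\in[m]}\forall x\exists y:\ \bigl(Z_k(x)\to \beta_k(x,y)\bigr),
\end{align*}
where $\chi$ has two jobs. It forces $T$ to hold for at most one element, and it forces any pair between the $T$-element and a $Q$-element to take a fixed neutral binary type. That pair is already accounted for in $\theta$, and its witness contributions are already recorded in $Z$. The remaining pairs are left free subject to $\phi$. Since $\phi$, $\chi$, and the $\beta_k$ do not mention $\domain$, the sentence $\sentence_{\mathrm{aux}}$ is independent of $\domain$. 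The configuration induced by $\theta$ counts the elements of $\domain'$ by their extended unary type, namely the original type together with the $T,Q,D,Z_k$ bits.

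\emph{Correctness.} We must prove that $\theta$ is extendable iff this configuration is satisfiable for $\sentence_{\mathrm{aux}}$.

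For ($\Rightarrow$), take an extension $\structure$ of $\theta$ and restrict it to $\domain'$. We relabel the pairs $(e_i,e_\ell)$ with $\ell<j$ by the neutral type. Every $Z_k$-marked element still needs, and in $\structure$ obtains, a witness among the free pairs. Only free pairs can supply such a witness, since any fixed witness would have cleared the $Z_k$ bit.

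For ($\Leftarrow$), take a model $\mathcal{B}$ of $\sentence_{\mathrm{aux}}$ realizing the configuration. By symmetry of $\sentence_{\mathrm{aux}}$ under permuting elements of equal extended type, we may assume $\mathcal{B}$ assigns the extended types exactly as $\theta$ does. We then reinstate the binary types fixed by $\theta$, reinstate all atoms on $\{e_1,\dots,e_{i-1}\}$, and check both conjuncts of $\sentence$. The universal conjunct holds because valid types were used everywhere. The existential conjuncts hold through either a fixed witness or the $Z_k$ clause.

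\emph{Main obstacle.} We expect the main difficulty to be the neutral-type trick for the fixed pairs $(e_i,e_\ell)$, $\ell<j$. A fixed neutral binary type valid for all unary type pairs may not exist. If it does not, we will instead add a fresh binary predicate $F$ that marks these pairs and exempts them from $\phi$ via $F(x,y)\to\bigwedge_k\neg\beta_k(x,y)$. We will also need to verify that the resulting type bookkeeping is still finite and independent of $n$.
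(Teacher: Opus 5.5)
Your proposal is correct and follows essentially the paper's route: unary markers $T,Q$ for $e_i$ and its already-paired elements, unary predicates $Z_k$ for open witness obligations, an auxiliary sentence that keeps $\phi$ and demands witnesses for $Z_k$-elements only on pairs not yet fixed by $\theta$, and the restrict/reinstate argument for the two directions. The one difference is how the fixed $T$--$Q$ pairs are handled. The paper never uses a neutral type: it goes directly to a binary marker $D$, pinned down exactly by $T(x)\wedge Q(y)\rightarrow D(x,y)\wedge D(y,x)$ and its converse, keeps $\phi$ on all pairs, and uses $Z_k(x)\rightarrow\exists y:\beta_k(x,y)\wedge\neg D(x,y)$, so it needs neither a neutral type nor an exemption from $\phi$ (in the ($\Leftarrow$) direction the model's types on $D$-pairs are simply overwritten by those of $\theta$).
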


Since $\sentence_{\mathrm{aux}}$ is independent of $n$, we can utilize the similar approach as in Stage~I to perform the configuration satisfiability test for $\sentence_{\mathrm{aux}}$.
% Thus the extendability check for $\theta$ can be done in $O(\log n)$ time.
For fixed sentence, this test involves only a constant number of arithmetic operations on counters bounded by $n$.
Thus, the extendability check for $\theta$ can be done in $O(\log n)$ time.

% \begin{theorem}
% \label{th:compilation-complexity}
%   For a fixed \fotwo sentence $\sentence$ and a domain $\domain$ of size $n$, Algorithm~\ref{alg:two-stage-algorithm} runs in time $O(N\log n)$, where $N$ is the size of the output circuit.
% \end{theorem}
% \begin{proof}
%   Straightforward from the fact that each extendability check can be done in $O(\log n)$ time, and thus the time spent on extendability checks is $O(N\log n)$.
% \end{proof}

% We note that the $O(N\log n)$ time complexity is non-trivial, even though $N$ can be exponential in $n$, since the naive method of performing extendability checks by SAT solving would generally lea
% We remark that this result does not break the exponential lower bound on the size of d-DNNF circuits for \fotwo sentences, since $N$ can be exponential in $n$.
% Conversely, if there exists a polynomial-size d-DNNF circuit for $\Phi_{\sentence,\domain}$, then our algorithm runs in polynomial time, since the time spent on extendability checks is $O(N\log n)$.
% \footnote{Conversely, a polynomial-size d-DNNF circuit does not imply a polynomial time compilation, and thus our result is non-trivial.}

\section{Compilation Optimizations}
\label{sec:optimizations}
% While the two-stage construction in Section~\ref{sec:compilation} is correct, it can be inefficient in practice because it branches over all valid unary and binary types naively. 
In this section, we present optimizations for the compilation.
The main idea is to identify and merge equivalent branches returned from recursive calls to $\textsc{Stage-I}$ and $\textsc{Stage-II}$ (recall that equivalent circuits have the same set of satisfying assignments).
For instance, let $\textsc{Or}(\mathcal{S})$ be the $\vee$-node returned from a call to $\textsc{Stage-I}(i, \theta)$.
If $\mathcal{S}$ contains two $\land$-nodes $\tau(e_i)\wedge C$ and $\tau'(e_i)\wedge C'$, such that $C$ and $C'$ obtained at Line~\ref{line:stage-i-recursion} in Algorithm~\ref{alg:two-stage-algorithm} are equivalent, then we can merge the two branches from $(\tau(e_i) \wedge C) \lor (\tau'(e_i) \wedge C')$ to $(\tau(e_i) \lor \tau'(e_i)) \wedge C$, and thus avoid the redundant recursive call and extendability check for $C'$.

% The above optimizations not only reduce the size of the resulting circuit, but also improve the efficiency of the compilation by avoiding redundant branches and extendability checks.

% For ease of presentation, we may write $\sentence\land \gamma$, where $\gamma$ is a conjunction of ground literals over $\domain$, to denote the sentence conditioned on the facts in $\gamma$.
Let us make the process formal.
We call two circuits $C$ and $C'$ returned from $\textsc{Stage-I}(i, \theta)$ (or $\textsc{Stage-II}((i, j), \theta)$) \emph{at the same level} if they are returned for the same $i$ (or the same $(i,j)$) but possibly different contexts $\theta$ and $\theta'$.
It is clear that two circuits at the same level have the same variable set.
Then we have the following proposition.

\begin{proposition}
\label{prop:equiv-circuit-sharing}
Let $C$ and $C'$ be two equivalent circuits at the same level.
Redirecting all edges to $C'$ to $C$ and removing $C'$ does not affect the correctness, determinism or decomposability of the resulting circuit.
\end{proposition}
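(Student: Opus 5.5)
The plan is to argue the three properties separately, using two facts: the semantics of an NNF node depends only on the function computed by its children, and $C$ and $C'$ at the same level mention the same variables.

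\emph{Correctness.} Since $C$ and $C'$ are logically equivalent, they compute the same Boolean function. I would show by induction on the DAG, processing nodes in topological order from the leaves up, that after redirecting every edge into $C'$ to point at $C$, each remaining node computes the same function as before. Leaves are untouched. Any node that had $C'$ as a child now has $C$ as that child, which computes the same function. Every other node evaluates the same children as before, and by induction those children are unchanged. Applying this to the output node gives that the modified circuit still encodes $\Phi_{\sentence,\domain}$, by Theorem~\ref{th:two-stage-correct}. No node inside $C$ depends on $C'$, because $C$ and $C'$ sit at the same level and subcircuits at a level only reference lower levels. Hence the redirection creates no cycle and the result is still a DAG.

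\emph{Decomposability.} Consider an $\wedge$-node $g$ that had $C'$ as a child. Decomposability of $g$ depends only on the variable sets of its children. I would use the observation stated just before the proposition: circuits at the same level have the same variable set. This holds because a call at level $t$ mentions exactly the variables of the blocks $b_t,\dots,b_N$, by the backward induction from the proof of \cref{cor:OBDD}. So $\mathrm{var}(C)=\mathrm{var}(C')$, and the children of $g$ remain pairwise variable-disjoint. The variable sets of ancestors of $g$ are also unchanged, so decomposability propagates upward. This equality of variable sets is the step I would check most carefully. Pruning could, in principle, drop a variable from a subcircuit. To rule that out I would argue that each retained branch at level $t$ conjoins literals over all of block $b_t$, so every variable of $b_t,\dots,b_N$ still occurs unless the subcircuit is empty. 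If the subcircuit is empty, both circuits are unsatisfiable $\vee$-nodes, and the claim still holds after treating the empty case via its nominal variable set or by simply deleting such branches.

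\emph{Determinism.} Determinism of an $\vee$-node depends only on the functions of its children: the disjuncts must be pairwise inconsistent. By the correctness argument, every node keeps the same function, so every $\vee$-node keeps pairwise inconsistent disjuncts. Finally, the same argument can be applied repeatedly to each merge performed, using induction on the number of merges.
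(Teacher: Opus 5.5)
Your proposal is correct and follows the paper's route. Decomposability comes from $\mathrm{var}(C)=\mathrm{var}(C')$, and correctness and determinism come from the logical equivalence of $C$ and $C'$, so every node keeps its function. Beyond this, you spell out what the paper takes as evident: the node-by-node induction, that the redirected graph remains acyclic, and why circuits at the same level have identical (not merely contained) variable sets, which uses the fact that every non-root call retains at least one branch.
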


\begin{proof}
The decomposability is preserved since $C$ and $C'$ have the same variable set.
By the equivalence of $C$ and $C'$, they have the same set of satisfying assignments.
Thus determinism is also preserved.
The correctness follows from the same reason: any satisfying assignment of the original circuit also satisfies the modified circuit, and vice versa, since $C$ and $C'$ have the same set of satisfying assignments.
\end{proof}

\begin{remark}
\cref{prop:equiv-circuit-sharing} can also apply to the compilation of structured d-DNNFs and OBDDs as in~\cref{cor:OBDD}, since the sharing is between circuits at the same level, which have the same variable set and thus are in the same vtree node for structured d-DNNFs and at the same variable for OBDDs.
\end{remark}

The pruning technique for equivalent circuits described above is not specific to our two-stage compilation, but can be applied to any d-DNNF compilation algorithm that constructs circuits by branching on different contexts.
The advantage of our two-stage compilation is that it provides clear criteria for identifying equivalent circuits from their contexts, without the need to check the equivalence of circuits explicitly, which can be costly.
In practice, we maintain a hash table indexed by an abstraction of the current context for each level.
Whenever a new recursive call is generated, we compute the context abstraction and check whether an equivalent subcircuit has already been compiled at that level.
If so, we reuse the existing subcircuit rather than constructing a new one.
This avoids redundant calls and extendability checks, and thus improves the efficiency of the compilation.
In the rest of this section, we show how to identify equivalent circuits at the same level from their contexts.

For ease of presentation, given a circuit $C$ and its corresponding context $\theta$, we may view $f\land \theta$ as a structure over $\domain$ for any assignment $f$ of the propositional variables in $C$, and thus write $f\land \theta \models \sentence$ to mean that the structure induced by $f\land \theta$ satisfies $\sentence$.

\subsection{Equivalent Circuits from Stage I}

We first consider the equivalence of circuits returned by $\textsc{Stage-I}(i+1, \theta \wedge \tau(e_i))$ at Line~\ref{line:stage-i-recursion} for different unary type choices $\tau$ but the same context $\theta$.
\begin{proposition}
\label{prop:stage1-equivalence}
Let $C$ and $C'$ be the circuits returned by $\textsc{Stage-I}(i+1, \theta \wedge \tau(e_i))$ and $\textsc{Stage-I}(i+1, \theta \wedge \tau'(e_i))$, respectively, where $\tau$ and $\tau'$ are two valid unary types for $e_i$.
If the following conditions hold:
\begin{itemize}
  \item for every $k\in[m]$, $\beta_k(x,x)\in \tau$ iff $\beta_k(x,x)\in \tau'$; and
  \item for every unary type $\tau''\in \utypes$, $\btypes(\tau,\tau'')$ is identical to $\btypes(\tau',\tau'')$,
\end{itemize}
then $C$ and $C'$ are equivalent.
\end{proposition}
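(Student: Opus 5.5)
By Theorem~\ref{th:two-stage-correct}, restricted to subcalls, the circuit returned by $\textsc{Stage-I}(i+1,\theta\wedge\tau(e_i))$ encodes exactly the assignments $f$ to the variables not fixed by the context such that $f\wedge\theta\wedge\tau(e_i)\models\sentence$. The same holds for $\tau'$. Both circuits are over the same variable set: all variables except the blocks $B_1,\dots,B_i$. So it suffices to show that for every assignment $f$ to these remaining variables,
\[
f\wedge\theta\wedge\tau(e_i)\models\sentence \iff f\wedge\theta\wedge\tau'(e_i)\models\sentence .
\]
By symmetry we prove one direction. Let $\structure$ be the structure induced by $f\wedge\theta\wedge\tau(e_i)$, and let $\structure'$ be the one induced by $f\wedge\theta\wedge\tau'(e_i)$. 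They differ only in the block $B_i$, i.e.\ in the unary and diagonal atoms of $e_i$.

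First we check the universal conjunct $\forall x\forall y:\phi(x,y)$ in $\structure'$. Pairs not involving $e_i$ are unchanged. For $x=y=e_i$ we use that $\tau'$ is valid. For a pair $(e_i,e_\ell)$ with $\ell\neq i$, let $\tau''$ be the unary type of $e_\ell$ and $\pi$ the binary type realized by $(e_i,e_\ell)$ in $\structure$. Since $\structure\models\phi(e_i,e_\ell)\wedge\phi(e_\ell,e_i)$, and valid binary types are determined by $\phi$ given the two unary types, $\pi\in\btypes(\tau,\tau'')=\btypes(\tau',\tau'')$. Hence $\tau'(x)\wedge\tau''(y)\wedge\pi(x,y)\models\phi(x,y)\wedge\phi(y,x)$, which gives both orientations. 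One point needs care: $\structure\models\tau(a)\wedge\tau''(b)\wedge\pi(a,b)$ together with satisfaction of $\phi$ on this particular pair must imply that $\pi$ is valid. This holds because $\phi(x,y)\wedge\phi(y,x)$ over distinct $x,y$ mentions only atoms fixed by the three types, so its truth value is determined by them.

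Next we check the existential conjuncts $\forall x\exists y:\beta_k(x,y)$. The atom $\beta_k(a,b)$ with $a\neq b$ lies in the binary block of $\{a,b\}$, which is identical in $\structure$ and $\structure'$. So a witness $b\neq a$ transfers directly. The only remaining case is the self-witness $\beta_k(e_i,e_i)$, which is a diagonal atom in $\tau$ or $\tau'$. By the first hypothesis, $\beta_k(x,x)\in\tau$ iff $\beta_k(x,x)\in\tau'$, so this case transfers as well. Elements $e_\ell\neq e_i$ use witnesses whose atoms do not lie in $B_i$ unless the witness is $e_i$. Even then, the atom $\beta_k(e_\ell,e_i)$ is binary and is unchanged.

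The main obstacle is the step above showing that the realized binary type is valid for the unary pair. Once it is in place, the argument is bookkeeping about which ground atoms lie in block $B_i$. We will use that types are maximally consistent, so $\tau\wedge\tau''\wedge\pi$ is a complete assignment to every atom $\phi$ can mention on $\{x,y\}$; hence "$\phi$ holds in $\structure$" and "$\pi$ valid" coincide. Combining the universal and existential cases gives $\structure'\models\sentence$, and so $C$ and $C'$ are equivalent.
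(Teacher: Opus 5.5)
Your proof is correct and follows the same route as the paper. You reduce the equivalence of $C$ and $C'$ to showing $f\wedge\theta\wedge\tau(e_i)\models\Gamma$ iff $f\wedge\theta\wedge\tau'(e_i)\models\Gamma$, handle the universal conjunct via validity of $\tau'$ and equality of the valid binary-type sets, and handle the existential conjuncts via the shared diagonal $\beta_k$ atoms and the unchanged off-diagonal atoms; your explicit arguments that the realized binary type on $(e_i,e_\ell)$ is valid for $(\tau,\tau'')$, and that witnesses for elements other than $e_i$ are unaffected, spell out steps the paper leaves implicit.
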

Intuitively, the first condition guarantees that assigning $\tau$ or $\tau'$ to $e_i$ makes the same progress toward satisfying the existential constraints, while the second condition guarantees that they induce the same binary type compatibility with elements of any valid unary type.
Hence, the two contexts $\theta \wedge \tau(e_i)$ and $\theta \wedge \tau'(e_i)$ determine the same compilation subproblem.
\begin{proof}[Proof of \cref{prop:stage1-equivalence}]
By the compilation procedure, an assignment $f$ satisfies $C$ iff $f\land \theta \land \tau(e_i)\models \sentence$, and similarly for $C'$ and $\theta \land \tau'(e_i)$.
Therefore, to show the equivalence of $C$ and $C'$, it suffices to show that for every assignment $f$ over the propositional variables in $C$ (and $C'$), $f\land \theta \land \tau(e_i)\models \sentence$ iff $f\land \theta \land \tau'(e_i)\models \sentence$.
We only prove one direction, and the other direction follows by a symmetric argument.
Let $f$ be an assignment of $C$ such that $f\land \theta \land \tau(e_i)\models \sentence$.
We show that $f\land \theta \land \tau'(e_i)\models \sentence$.

Consider the universal constraint $\forall x\forall y:\phi(x,y)$ in $\sentence$.
Denote $\phi(x,y) \land \phi(y,x)$ by $\psi(x,y)$ for ease of presentation.
The difference between $\tau$ and $\tau'$ only affects the satisfaction of $\psi(e_i,e_i)$ and $\psi(e_i,e_\ell)$ for $\ell \neq i$, while for any other pair of elements, the satisfaction of $\psi$ is determined by $\theta$ and $f$, and thus is the same for both $\tau$ and $\tau'$.
Since $\tau'$ is valid, we have that $\tau'(e_i) \models \psi(e_i, e_i)$.
For any $e_\ell$ with $\ell \neq i$, let $\tau''$ be the unary type of $e_\ell$ in $\theta$, and let $\pi$ be the binary type for $(e_i,e_\ell)$ in $f$. 
Then by the second condition that $\pi\in \btypes(\tau,\tau'')$ iff $\pi\in \btypes(\tau',\tau'')$, we have $\tau'(e_i)\wedge \tau''(e_\ell)\wedge \pi(e_i,e_\ell) \models \psi(e_i,e_\ell)$.
Therefore, $f\land \theta \land \tau'(e_i)\models \forall x\forall y:\phi(x,y)$.

The satisfaction of the existential constraints $\forall x\exists y:\beta_k(x,y)$ for $k\in[m]$ can be analyzed similarly.
We only need to show that $f\land \theta \land \tau'(e_i)\models \bigwedge_{k\in[m]} \exists y:\beta_k(e_i,y)$.
For any $k\in [m], \tau(e_i)$ and $\tau'(e_i)$ interpret the same $\beta_k(e_i, e_i)$ by the first condition, and therefore, if $\exists y:\beta_k(e_i,y)$ is satisfied in $\tau(e_i)$ by $\beta_k(e_i,e_i)$, then it is also satisfied in $\tau'(e_i)$.
Otherwise, there must be some $e_\ell$ with $\ell \neq i$ such that $\beta_k(e_i,e_\ell)$ holds in $f\land \theta$, and thus also holds in $f\land \theta \land \tau'(e_i)$. 
Therefore, $f\land \theta \land \tau'(e_i)\models \bigwedge_{k\in[m]} \exists y:\beta_k(e_i,y)$.
\end{proof}

% In practice, the abstraction used as the hash key consists of three components:
% the satisfaction state $S_\theta$;
% the family of sets $\btypes(\tau_\ell,\tau_t)$ for pairs of processed elements;
% and the family of sets $\btypes(\tau_\ell,\tau)$ for each processed element $e_\ell$ and each valid unary type $\tau\in \utypes$.
% \cref{prop:extendability-stage-1} shows that this abstract signature is sufficient for detecting equivalent residual subproblems at Stage~I.

\subsection{Equivalent Circuits from Stage II}

Next, we consider the circuits returned by $\textsc{Stage-II}((i, j), \theta)$ for different contexts $\theta$ and $\theta'$ but the same $(i,j)$.
% The conditions for equivalence are more involved than in Stage I because $\theta$ and $\theta'$ may differ in the binary types for pairs of processed elements, which affects the satisfaction of existential constraints and satisfying assignments of the resulting circuits.
% For instance, when processing the pair $(e_i,e_j)$, the contexts $\theta$ and $\theta'$ may differ in the binary types for pairs involving $e_i$ or $e_j$, which might lead to different satisfaction of the existential constraints $\exists y:\beta_k(e_i,y)$ and $\exists y:\beta_k(e_j,y)$, and thus inequivalent circuits.
Although these two contexts correspond to the same stage of the compilation, they may differ in the unary types assigned to elements and in the binary types assigned to processed pairs.
Such differences affect the subproblem primarily through (i) the progress already made toward satisfying the existential constraints,  and (ii) the binary type compatibility induced for the pairs that remain to be processed.

To capture this progress, we maintain a \emph{satisfaction state} $S_\theta$ for each context $\theta$.
Formally, the satisfaction state $S_\theta$ of $\theta$ is an $n\times m$ Boolean matrix, where for each $\ell\in [n]$ and $k\in[m]$, 
$$
S_\theta(\ell,k)=\mathrm{True}
\quad\Leftrightarrow\quad
\exists \ e_t\in \domain\ \text{s.t.}\ \theta \models \beta_k(e_\ell,e_t).
$$
Moreover, by the time Stage~II starts, all unary type choices have already been fixed within each context, although not necessarily identically across different contexts. 
Hence, to characterize the residual compilation subproblem at $\textsc{Stage-II}((i,j),\theta)$, it suffices to record the current satisfaction state together with the binary type compatibility for the pairs whose binary types have not yet been assigned. 
This yields the following proposition.
\begin{proposition}
\label{prop:stage2-equivalence}
Fix a pair $(i,j)$. Let $C$ and $C'$ be the circuits returned by $\textsc{Stage-II}((i, j), \theta)$ and $\textsc{Stage-II}((i, j), \theta')$, respectively.
If the following conditions hold:
\begin{itemize}
  \item $S_{\theta}=S_{\theta'}$; and
  \item for any unprocessed pair of elements $(e_\ell,e_t)$, $\btypes(\tau_\ell,\tau_t)$ is identical to $\btypes(\tau_\ell', \tau_t')$, where $\tau_\ell$ and $\tau_t$ (resp. $\tau_\ell'$ and $\tau_t'$) are the unary types of $e_\ell$ and $e_t$ in $\theta$ (resp. $\theta'$),
\end{itemize}
then $C$ and $C'$ are equivalent.
\end{proposition}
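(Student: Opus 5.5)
We follow the proof of \cref{prop:stage1-equivalence}. The procedure guarantees that an assignment $f$ to the variables of $C$ satisfies $C$ iff $f\land\theta\models\sentence$. The same holds for $C'$ with $\theta'$, and $C$ and $C'$ share the same variable set because they are at the same level. These variables are exactly the ground atoms of the unprocessed pairs $(e_\ell,e_t)$, meaning $\ell<t$ and $(\ell,t)$ is lexicographically at or after $(i,j)$. So it suffices to fix such an $f$, assume $f\land\theta\models\sentence$, and prove $f\land\theta'\models\sentence$; the converse direction is symmetric. We check the universal conjunct and the existential conjuncts of the Scott normal form separately.

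\emph{Universal conjunct.} Write $\psi(x,y)=\phi(x,y)\wedge\phi(y,x)$. There are three groups of instances.
\begin{itemize}
\item Diagonal instances $\psi(e_\ell,e_\ell)$ hold under $\theta'$ because Stage~I assigns only valid unary types.
\item For a processed pair, the binary type was chosen in $\btypes(\tau'_\ell,\tau'_t)$ along the branch leading to $\theta'$, so $\psi$ holds there by definition of validity.
\item For an unprocessed pair, $f$ gives a binary type $\pi$. Since $f\land\theta\models\sentence$, we have $\pi\in\btypes(\tau_\ell,\tau_t)$. The second condition gives $\pi\in\btypes(\tau'_\ell,\tau'_t)$, so $\tau'_\ell(e_\ell)\wedge\tau'_t(e_t)\wedge\pi(e_\ell,e_t)\models\psi(e_\ell,e_t)$.
\end{itemize}
Hence $f\land\theta'\models\forall x\forall y:\phi(x,y)$.

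\emph{Existential conjuncts.} Fix $\ell\in[n]$ and $k\in[m]$; we need a witness $e_t$ with $\beta_k(e_\ell,e_t)$ true in $f\land\theta'$. If $S_{\theta'}(\ell,k)=\mathrm{True}$, a witness is already fixed by $\theta'$. Otherwise $S_\theta(\ell,k)=\mathrm{False}$ by the first condition. In that case no atom $\beta_k(e_\ell,\cdot)$ fixed by $\theta$ is true, including the diagonal atom, which $\theta$ fixes through the unary type. Since $f\land\theta\models\exists y:\beta_k(e_\ell,y)$, the witness must come from an atom assigned by $f$, i.e.\ $\beta_k(e_\ell,e_t)$ or $\beta_k(e_t,e_\ell)$ lying in the block of an unprocessed pair. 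That atom has the same truth value under $f\land\theta'$, since it is a variable of $C'$ and is not constrained by $\theta'$. So the requirement holds in $f\land\theta'$ as well.

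\emph{Main obstacle.} The delicate step is the bookkeeping in the existential case. One must verify three things: the definition of $S_\theta$ (``$\theta\models\beta_k(e_\ell,e_t)$'') covers diagonal atoms and both orientations of a pair block; $\theta$ and $f$ partition the ground atoms exactly, so no atom is constrained by both or by neither; and an atom in $\mathrm{var}(C)$ is never determined by $\theta$ or $\theta'$. I would state these facts first as a short observation about the block structure of Stage~II (blocks $B_\ell$ and $B_{\ell t}$, where $B_{\ell t}$ contains both orientations). The two cases above then follow directly.
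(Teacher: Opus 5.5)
Your proposal is correct and follows the paper's proof. It uses the same diagonal/processed/unprocessed split for $\psi$ and the same existential dichotomy: either the witness is fixed by the context and transfers through $S_\theta=S_{\theta'}$, or it is supplied by $f$. The only differences are that you case on $S_{\theta'}(\ell,k)$ rather than on where the witness lies in $f\land\theta$, and you have one small slip: the witness atom is always $\beta_k(e_\ell,e_t)$, lying in block $B_{\ell t}$ or $B_{t\ell}$ depending on index order, since a true $\beta_k(e_t,e_\ell)$ would not witness $\exists y:\beta_k(e_\ell,y)$.
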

The proof is similar to that of~\cref{prop:stage1-equivalence}, and therefore provided in the appendix.
Note that for a sentence without existential constraints, the first condition in \cref{prop:stage2-equivalence} is vacuous. 
Hence, the equivalence of subcircuits returned by Stage II depends only on the valid binary type sets for the unprocessed pairs.

\begin{example}
  Consider the sentence $\sentence_{RB}$ in \cref{ex:2-color}, which has no existential constraints, over the vertex domain $\{e_1,e_2,e_3\}$.
  Let
  $
    \theta = \tau_R(e_1) \wedge \tau_B(e_2) \wedge \tau_B(e_3)
  $ 
  and
  $
    \theta' = \tau_B(e_1) \wedge \tau_R(e_2) \wedge \tau_R(e_3)
  $
  be two contexts obtained after Stage I, where $\tau_R$ and $\tau_B$ are the unary types for red and blue, respectively.
  Suppose Stage II is about to process the first pair $(e_1,e_2)$.
  Since $\sentence_{RB}$ has no existential constraints, we have $S_\theta = S_{\theta'} = \emptyset$.
  Moreover, by \cref{ex:2-color}, for each unprocessed pair, the valid binary type set under $\theta$ is identical to that under $\theta'$.
  By Proposition~\ref{prop:stage2-equivalence}, the circuits returned by $\textsc{Stage-II}((1,2),\theta)$ and $\textsc{Stage-II}((1,2),\theta')$ are equivalent.
  Hence, once one of these subcircuits has been computed, it can be reused for the other without making the second recursive call.
\end{example}

If existential constraints are present, then the satisfaction state must also be taken into account when checking the equivalence of subcircuits returned by Stage II.

\begin{example}
  Consider the sentence $\sentence_E$ in \cref{ex:context} over the vertex domain $\{e_1,e_2,e_3,e_4\}$.
  Since $\sentence_E$ has a unique valid unary type, the second condition in Proposition~\ref{prop:stage2-equivalence} is automatically satisfied.
  Suppose Stage II is about to process the pair $(e_2,e_4)$, and let
  $$
    \theta
    = \pi_E(e_1,e_2) \wedge \pi_{\neg E}(e_1,e_3) \wedge \pi_{\neg E}(e_1,e_4) \wedge \pi_E(e_2,e_3)
  $$
  and
  $$
    \theta'
    = \pi_E(e_1,e_2) \wedge \pi_E(e_1,e_3) \wedge \pi_{\neg E}(e_1,e_4) \wedge \pi_{\neg E}(e_2,e_3),
  $$
  where we omit the common unary type assignments.
  The two different contexts induce the same satisfaction state: in both $\theta$ and $\theta'$, the elements $e_1$, $e_2$, and $e_3$ already satisfy the existential requirement $\exists y\, E(x,y)$, while $e_4$ still has no witness.
  Therefore, by Proposition~\ref{prop:stage2-equivalence}, the circuits returned by $\textsc{Stage-II}((2,4),\theta)$ and $\textsc{Stage-II}((2,4),\theta')$ are equivalent.
\end{example}

Note that the above optimizations incur at most quadratic overhead in the domain size. 
Hence, the optimized compilation satisfies the following time complexity guarantee.
\begin{theorem}
\label{thm:optimized-compilation-complexity}
For a fixed \fotwo sentence $\sentence$ and a finite domain $\domain$, \cref{alg:two-stage-algorithm} equipped with the optimizations runs in time
$O\big((|C|+1)\, |\domain|^2\big)$, where $|C|$ is the size of the output circuit produced by
the optimized algorithm.
\end{theorem}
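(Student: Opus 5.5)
The plan is to charge all work to the recursive calls that are actually executed, and to bound their number by the size of the output circuit. The first step is a structural observation about the optimized algorithm. Every call to \textsc{Stage-I} or \textsc{Stage-II} that is actually executed, rather than answered from the cache, returns a fresh $\vee$-node. After merging, this node appears in $C$, and it is distinct from all others, because the cache at each level is keyed by the context abstraction and a hit returns an existing node instead of making a new call. The same holds for pruned branches and terminal calls, which return $\top$ or a constant leaf. The exception is the root call chain when $C$ is trivial, and this is exactly why the bound has the $+1$. So the number of executed calls is $O(|C|+1)$. The number of attempted child branches inside one call is at most $\max\{|\utypes|, \max_{\tau,\tau'}|\btypes(\tau,\tau')|\}$, a constant for fixed $\sentence$. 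Hence the number of branch attempts is also $O(|C|+1)$. Each branch attempt either yields an edge in $C$, is pruned by \textsc{Extendable}, or is resolved by a cache lookup.

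The second step bounds the work per branch attempt by $O(|\domain|^2)$. This work has four components. The first is the extendability check. For Stage~I this is Proposition~\ref{prop:extendability-stage-1}, which iterates over the constantly many precomputed templates. For Stage~II it is Proposition~\ref{prop:substructure-extend}: building the induced configuration of $\sentence_{\mathrm{aux}}$ can be done incrementally from the parent's counters in $O(1)$ updates, or naively in $O(n^2)$. Either way it costs at most $O(n^2)$, which suffices. The second component is updating the context and the satisfaction state $S_\theta$. Assigning $\pi(e_i,e_j)$ only changes the rows $i$ and $j$, so this takes $O(m)=O(1)$ time, although copying the $n\times m$ matrix costs $O(n)$. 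The third is computing the cache key. For Stage~II (Proposition~\ref{prop:stage2-equivalence}) the key consists of $S_\theta$ of size $O(n)$ together with the sets $\btypes(\tau_\ell,\tau_t)$ for all unprocessed pairs, of which there are $O(n^2)$, each of constant size. Hashing and comparing this key therefore takes $O(n^2)$ time. For Stage~I (Proposition~\ref{prop:stage1-equivalence}) the test compares two types in $O(1)$, or $O(n)$ if the partial configuration is included. The fourth is building the $\textsc{And}$ nodes over the $O(1)$ literals of a type, which costs $O(1)$.

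The total time is therefore $O((|C|+1)\cdot n^2)$, plus the precomputation of $\utypes$, $\btypes$, and the template configurations of Theorem~\ref{th:cfg-template} for both $\sentence$ and $\sentence_{\mathrm{aux}}$, which depends only on $\sentence$ and is thus $O(1)$.

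The main obstacle I expect is justifying that pruned branches and cache hits do not escape the charging to $|C|$. Pruned branches are attempted from a call whose node lies in $C$, so they are charged to that parent node's constant fan-out. Cache hits likewise arise from a parent in $C$ and produce an edge to an existing node. There is a subtle case: a call could return an empty $\textsc{Or}$, that is, $\bot$, which is later simplified away. This cannot happen, because a call is only made after \textsc{Extendable} succeeded, and by the correctness argument of Theorem~\ref{th:two-stage-correct} an extendable context has at least one extendable child. Therefore no executed call produces a dead subcircuit, and every executed call is witnessed by a node of $C$.
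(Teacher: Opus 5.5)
Your proposal is correct and follows essentially the same argument as the paper. Both charge the constantly many branch attempts of each expanded (cache-miss) call at $O(|\domain|^2)$ apiece, a cost dominated by the Stage~II cache key; your looser $O(|\domain|^2)$ bound for the Stage~II \textsc{Extendable} test is harmless. Both then bound the number of expanded calls by $|C|$, since each creates a distinct $\vee$-node that has at least one child because its context is extendable. One small slip: terminal calls return a possibly shared $\top$ rather than distinct nodes, but, as with your pruned branches and as in the paper, they are covered by the constant fan-out of their parent calls.
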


\begin{proof}[Proof sketch]
We first analyze the time spent on $\textsc{Stage-I}$ and $\textsc{Stage-II}$ without considering the time spent on recursive calls.
Note that the sentence is fixed.
Every $\textsc{Stage-I}$ (or $\textsc{Stage-II}$) traverses a constant number of unary (or binary) types, and hence performs a constant number of $\textsc{Extendable}$ tests and hash table lookups for optimizations.
From \cref{prop:extendability-stage-1} and \cref{prop:substructure-extend}, we know that any $\textsc{Extendable}$ test takes $O(\log |\domain|)$ time.
Furthermore, the Stage~I memoization data have constant size due to the constant number of unary types, and the Stage~II abstraction has size $O(|\domain|^2)$ (an $O(|\domain|)$ satisfaction state and $O(|\domain|^2)$ compatibility data on unprocessed pairs).
Therefore, the time spent on any call to $\textsc{Stage-I}$ and $\textsc{Stage-II}$ without considering recursive calls is $O(|\domain|^2)$.

It remains to bound the number of recursive calls. 
Every non-root recursive call is generated only from a branch, where the $\textsc{Extendable}$ test succeeds. 
Each such successful branch contributes a distinct edge to the output circuit, even if the output circuit is reused by multiple branches.
Hence the total number of recursive calls is $O(|C|+1)$. 
Therefore the total running time is $O((|C|+1)\, |\domain|^2)$. 
The full proof can be found in the appendix.
\end{proof}

% \begin{figure}[t]
% \centering
% \includegraphics[width=0.6\textwidth]{fig/ratio_scatter_n10.pdf}
% \caption{}
% \end{figure}

\section{Evaluation}
\label{sec:experiments}

We implemented our two-stage compilation algorithm and evaluated it against several propositional d-DNNF compilers.
All experiments were conducted on a machine with an Intel Xeon 8383C CPU and 128\,GB of RAM.

We considered several representative benchmark sentences that have been widely used in the literature, including $\sentence_{RB}$ from \cref{ex:2-color} (2-colored graphs), $\sentence_E$ from \cref{ex:context} (graphs without isolated vertices), $\sentence_{RBE} = \sentence_{RB} \wedge \sentence_{E}$, $\sentence_P$ from the proof of \cref{th:hardness_fo2} (permutation-like structures), and $\sentence_{D}$ which defines dominating sets:
\begin{align*}
\! \sentence_D = \big(\forall x: \neg E(x,x)\big) \wedge \big(\forall x\forall y:E(x,y)\rightarrow E(y,x)\big) \wedge \big(\forall x\exists y: \neg D(x) \rightarrow E(x,y)\wedge D(y)\big).
\end{align*}
We also evaluated a parametric benchmark family \texttt{ui-bj}:
\begin{align*}
\texttt{ui-bj} = &\forall x: \text{ExactlyOne}\big[ C_1(x), C_2(x), ..., C_i(x)\big] \ \wedge \\
&\bigwedge_{k\in[j]} \forall x\forall y: \big(E_k(x,y) \rightarrow E_k(y,x)\big) \wedge \bigwedge_{\substack{ k' \in [j] : \\ k'\neq k}} \forall x\forall y: \big(E_k(x,y) \rightarrow \neg E_{k'}(x,y)\big) \ \wedge \\
&\forall x\forall y: \big(\bigvee_{k\in[j]} E_k(x,y)\big) \rightarrow \bigwedge_{\ell\in[i]} \big(\neg C_\ell(x) \vee \neg C_\ell(y)\big) \ \wedge \ \bigwedge_{k\in[j]} \forall x\exists y: E_k(x,y).
\end{align*}
Here, $\text{ExactlyOne}$ abbreviates the constraint that exactly one of $C_1(x),\dots,C_i(x)$ holds.
We evaluated the settings $(i,j)\in\{(4,2), (2,2), (6,2), (4,1), (4,3)\}$.
In this family, each binary predicate $E_k$ is symmetric, and different predicates are mutually exclusive.
To further study the effect of structural properties, we also considered several variants of \texttt{u4-b2} with different interaction patterns between the binary predicates; full details are given in the appendix.

\begin{table}[t]
\centering
\small
\caption{Circuit-size / compilation-time ratios relative to our method. For each sentence, we report the largest domain size for which all compared methods successfully produced a d-DNNF circuit within the time limit. The best results are shown in bold and the second-best results are underlined. For Bella, we tested three hypergraph partitioning methods (PaToH, KaHyPar, and Cara); we report the size and runtime of the setting that produced the smallest circuit.}
\begin{tabular}{c|cccccccc}
\toprule

$(\sentence, |\domain|)$ & ours & bella & d4 & c2d \\
\midrule
$(\sentence_{RB}, 20)$ & \textbf{1.00}/\textbf{1.00} & 1.55/\underline{1.30} & 1.50/1.45 & \underline{1.23}/1.56 \\

$(\sentence_E, 15)$ & \textbf{1.00}/\textbf{1.00} & 10.08/27.97 & \underline{7.80}/\underline{12.71} & 86.91/312.88 \\

$(\sentence_{RBE}, 12)$ & \textbf{1.00}/\textbf{1.00} & \underline{1.12}/\underline{2.87} & 4.74/16.01 & 3.07/13.44 \\

$(\sentence_P, 10)$ & \underline{1.00}/\textbf{1.00} & 1.39/3.29 & \textbf{0.49}/\underline{1.10} & 12.65/47.84 \\

$(\sentence_D, 10)$ & \textbf{1.00}/\textbf{1.00} & 1.86/24.48 & \underline{1.34}/\underline{9.19} & 29.29/1431.64 \\
\midrule
$(\texttt{u2-b2}, 10)$ & \textbf{1.00}/\textbf{1.00} & 5.44/8.59 & 7.17/8.66 & \underline{2.57}/\underline{5.09} \\

$(\texttt{u4-b1}, 8)$ & \textbf{1.00}/\textbf{1.00} & 17.47/169.14 & 13.96/\underline{45.83} & \underline{7.39}/125.95 \\

$(\texttt{u4-b2}, 6)$ & \textbf{1.00}/\textbf{1.00} & 14.39/26.31 & 16.33/\underline{10.58} & \underline{4.35}/14.51 \\

$(\texttt{u4-b3}, 6)$ & \textbf{1.00}/\textbf{1.00} & 30.62/37.11 & 70.43/37.54 & \underline{4.10}/\underline{9.34} \\

$(\texttt{u6-b2}, 6)$ & \textbf{1.00}/\textbf{1.00} & \underline{142.66}/\underline{136.23} & 474.05/171.70 & 336.07/434.61 \\
\midrule
$(\texttt{u4-b2-ad}, 6)$ & \textbf{1.00}/\textbf{1.00} & 26.34/38.70 & 51.23/56.43 & \underline{4.79}/\underline{12.86} \\

$(\texttt{u4-b2-so}, 9)$ & \textbf{1.00}/\textbf{1.00} & 3.69/9.65 & 2.67/\underline{4.73} & \underline{2.22}/13.94 \\

$(\texttt{u4-b2-cc}, 6)$ & \textbf{1.00}/\textbf{1.00} & 9.19/16.99 & 12.78/\underline{9.63} & \underline{4.84}/16.68 \\

$(\texttt{u4-b2-fd}, 5)$ & \textbf{1.00}/\textbf{1.00} & 35.81/99.07 & 37.29/47.15 & \underline{4.14}/\underline{8.83} \\

$(\texttt{u4-b2-ao}, 8)$ & \textbf{1.00}/\textbf{1.00} & 5.38/54.77 & 7.22/7.93 & \underline{1.40}/\underline{4.68} \\

$(\texttt{u4-b2-fo}, 10)$ & 1.00/1.00 & \underline{0.04}/1.13 & 0.07/\textbf{0.45} & \textbf{0.02}/\underline{0.71} \\

\bottomrule
\end{tabular}
\label{tab:res}
\end{table}

We validated the correctness of our implementation by cross-checking model enumeration and model counts against propositional baselines.
For performance evaluation, we compared against three propositional d-DNNF compilers: Bella~\cite{bella}, d4~\cite{d4}, and c2d~\cite{c2d}.
Given a sentence $\sentence$ and a domain $\domain$, we first constructed the propositional grounding of $(\sentence,\domain)$, and then invoked each baseline compiler using its recommended or default settings.
We increased the domain size starting from $n=4$ until compilation exceeded the timeout of $1800$ seconds, or up to $n=20$.
For randomized methods, we report averages over $10$ successful runs. For deterministic and stable methods, we report averages over 3 runs. All experiments were executed single-threadedly.
The details of this pipeline are given in the appendix.

\cref{tab:res} reports the circuit-size and compilation-time ratios relative to our method. 
Each row corresponds to the largest domain size for which all compared methods successfully finished within the time limit. 
Overall, our method produces the smallest circuits in most of these representative comparisons, and it is also the fastest in many cases.
The main exceptions are $\sentence_P$ and \texttt{u4-b2-fo}, where our method is less effective. 
In both cases, the binary atoms are essentially unconstrained and can be handled more independently. 
Our current implementation does not exploit this independence: it still constructs binary types for $E(x,y)$ and $E(y,x)$ even when they could be processed separately.
In addition, we also evaluated the OBDD compilation and the full results are given in the appendix.

The empirical gains of our compiler come from three main sources. 
First, the compiler branches on types rather than on individual ground atoms. 
A type decision fixes a whole block of literals associated with an element or a pair, and therefore yields a significant reduction in the search space and more compact circuit than branching on atoms. 
Second, efficient extendability checks prune unsatisfying branches and hence avoid unnecessary recursive calls and circuit construction.
This is particularly useful for sentences with existential requirements.
% where a partial type assignment may already make it impossible for some element to obtain a required witness. 
Third, the subcircuit merging optimization allows the compiler to reuse previously computed subcircuits for different contexts that induce the same compilation subproblem.

\section{Conclusion and Discussion}
We studied the compilation of \fotwo{} sentences over finite domains into standard DNNF-based target languages. 
On the negative side, we showed that there exists a fixed \fotwo{} sentence whose grounding over a domain of size $n$ requires DNNF size $2^{\Omega(n)}$. 
On the positive side, we presented a specialized compilation method by exploiting some characterization of \fotwo{}.
% On the positive side, we developed a two-stage d-DNNF compiler that exploits the type structure of \fotwo{} sentences, branches on types rather than on individual ground atoms. 
% We further incorporated the optimizations that merge equivalent subcircuits and reduce redundant recursive calls and extendability checks.
Our experiments show that this approach is effective on several representative sentences, improving over baseline propositional compilation pipelines in both circuit size and compilation time. 
% Taken together, these results clarify the compilation landscape for \fotwo{}: compact compilation is impossible in the worst case, yet its restricted structure still enables a principled and effective specialized compiler.
Finally, we discuss some open questions and future directions.

It is well-known that the order of variable branching can significantly affect the performance of propositional compilers, such as c2d and cudd.
In our proposed method, we use a restricted processing order. 
It first branches unary types and then binary types. 
We remark that this restriction is not merely an implementation choice but is crucial for the efficiency of the method, more specifically for the efficiency of extendability checks.
Interleaving unary and binary type decisions would break the symmetry and make the extendability checks even fall back to the \#P-hardness of \fotwo{} satisfiability with arbitrary binary evidence~\cite{van_den_broeck_conditioning_2012-1}.
We however note that recent work on \fotwo{} model counting with binary evidence that is of bounded treewidth~\cite{kuula2026tractable} may provide a promising direction for extending the method to handle more flexible processing orders, that we leave for future work.

Another natural question is whether our method can be extended to $\mathbf{FO}^3\, $?
Such an extension is nontrivial. 
The feature of two variables is used in an essential way throughout our method. 
First, models of \fotwo{} can be organized by unary types and binary types, which makes the two-stage decomposition possible. 
For $\mathbf{FO}^3$, constraints relate more elements simultaneously, so pairwise type compatibility is no longer sufficient. 
Second, our extendability checks rely crucially on the bounded configuration characterization of~\cref{th:cfg-template}.  
This is closely related to the classical boundary between the well-behaved two-variable fragment and richer fragments with three or more variables.
Therefore, instead of targeting unrestricted $\mathbf{FO}^3$ directly, a promising direction is to study extensions of $\mathbf{FO}^2$ such as counting extensions like \ctwo or \fotwo{} with additional restricted axioms, e.g., tree or linear order axioms.

The last potential direction is to combine the type-level approach of our method with finer-grained propositional compilation techniques, which is suggested by our experiments on $\sentence_P$ and \texttt{u4-b2-fo}.
The current compiler operates primarily at the level of types, and may therefore miss finer-grained propositional independence in some instances.
%%
%% Bibliography
%%

%% Please use bibtex, 

\bibliography{lipics-v2021-sample-article}

\appendix

\section{Proof of Lemma~\ref{lem:partial-cfg-extend}}

\begin{proof} % [Proof of Lemma~\ref{lem:partial-cfg-extend}]
  If $\theta$ is extendable w.r.t.\ $(\sentence,\domain)$, then it has an extension $\theta'$ over $\domain$ that is consistent with some model of $\sentence$.
  Let $\vecn$ be the configuration induced by $\theta'$.
  We have that $|\vecn|=|\domain|$ and $n_i\ge \hat{n}_i$ for all $i$, since $\theta'$ extends the unary-type choices already made in $\theta$. Conversely, given a satisfiable configuration $\vecn$ with $|\vecn|=|\domain|$ and $n_i\ge \hat{n}_i$, we can extend $\theta$ to a unary-type assignment $\theta'$ on $\domain$ whose induced configuration is exactly $\vecn$ by assigning the remaining elements to match the required counts.
  Satisfiability of $\vecn$ then provides a model of $\sentence$, which can be renamed to domain $\domain$ to obtain a model consistent with $\theta'$, and hence $\theta$ is extendable.
\end{proof}

\section{The Details of Extendability Checks in Stage II}
\label{app:stage2-checks}

In this section, we make explicit the auxiliary sentence used in \cref{prop:substructure-extend} and the corresponding extendability checks in Stage II.

Recall that the input sentence is in Scott normal form:
\begin{align*}
\sentence \;=\; \forall x \forall y: \, \phi(x,y)
\;\wedge\;
\bigwedge_{k\in[m]} \forall x \exists y: \, \psi_k(x,y),
\end{align*}
Consider a context $\theta$ arising in Stage II after a binary type has been fixed for the current pair $(e_i,e_j)$, i.e., $\theta$ contains:
\begin{itemize}
  \item the unary type assignments for all elements $e_1,\dots,e_n$;
  \item the binary types for all pairs involving some element in $\{e_1,\dots,e_{i-1}\}$; and
  \item the binary types for the pairs $(e_i,e_\ell)$ with $i < \ell \le j$.
\end{itemize}
Let $\domain' = \{e_i,\dots,e_n\}$.
Since all unary types are already fixed in Stage I and every processed binary type is chosen from the valid set $\btypes(\tau,\tau')$, the universal conjunct $\forall x \forall y : \phi(x,y)$
has already been respected on the processed part. 
Thus, for future extendability, the remaining
information we need to record is:
\begin{enumerate}
  \item which pairs in $\domain'$ have already been fixed by the current context; and
  \item which elements are still waiting for witnesses for the existential conjuncts.
\end{enumerate}

To encode this information, we introduce fresh predicates $T/1$, $Q/1$, $D/2$, and $Z_k/1$ for each $k \in [m]$. 
Their intended interpretations over $\domain'$ are as follows.
\begin{itemize}
  \item $T$ marks the first element of $\domain'$, namely $e_i$:
  $$
  T(e_i) \text{ is true, and } T(e_\ell) \text{ is false for all } \ell > i.
  $$

  \item $Q$ marks those elements $e_\ell$ such that the binary type of $(e_i,e_\ell)$ has
 been fixed in $\theta$:
  $$
  Q(e_\ell) \text{ is true for } \ell \in \{i,\dots,j\},
  \text{ and }
  Q(e_\ell) \text{ is false for } \ell > j.
  $$

  \item $D$ marks ordered pairs whose binary types have already been fixed by $\theta$ inside
  $\domain'$:
  $$
  D(e_i,e_\ell) \text{ and } D(e_\ell,e_i) \text{ are true for every } \ell \in \{i,\dots,j\},
  $$
  and $D(a,b)$ is false for all other ordered pairs $(a,b) \in \domain' \times \domain'$.

  \item For each $k \in [m]$, the predicate $Z_k$ marks the elements that are not yet
  $\beta_k$-satisfied in $\theta$:
  $$
  Z_k(e) \text{ is true iff } \theta \not\models \exists y : \beta_k(e,y).
  $$
\end{itemize}
Intuitively, $T,Q,D$ describe which pairs are already unavailable for future choices, while
$Z_k$ records the remaining witness obligations.

We now define an auxiliary \fotwo sentence $\sentence_{\mathrm{aux}}$ over the expanded
vocabulary:
\begin{align}
\sentence_{\mathrm{aux}} \;=\;
&\ \forall x \forall y : \phi(x,y) \ \wedge \label{eq:app-aux0} \\
&\ \forall x \forall y :
   T(x) \wedge Q(y) \rightarrow D(x,y) \wedge D(y,x) \ \wedge \label{eq:app-aux1} \\
&\ \forall x \forall y :
   D(x,y) \rightarrow \bigl((T(x)\wedge Q(y)) \vee (T(y)\wedge Q(x))\bigr) \ \wedge \label{eq:app-aux2} \\
&\ \bigwedge_{k \in [m]}
   \forall x : Z_k(x) \rightarrow \exists y : \beta_k(x,y) \wedge \neg D(x,y). \label{eq:app-aux3}
\end{align}

The meaning of these conjuncts is as follows.
\begin{itemize}
  \item \eqref{eq:app-aux0} is simply the universal part inherited from $\sentence$.
  \item \eqref{eq:app-aux1} says that every pair consisting of the distinguished element
  marked by $T$ and an element already marked by $Q$ is recorded by $D$.
  \item \eqref{eq:app-aux2} gives the converse direction: if $D(x,y)$ holds, then one of
  $x,y$ must be the distinguished element and the other must already lie in the processed
  prefix of the current row. Hence, \eqref{eq:app-aux1} and \eqref{eq:app-aux2} together ensure that $D$ marks exactly those pairs in $\domain'$ whose binary types have already been fixed by the current context.
  \item Finally, \eqref{eq:app-aux3} says that if an element $x$ is still missing a witness
  for the $k$-th existential conjunct, then such a witness must be found using a pair
  $(x,y)$ whose binary type has not yet been fixed, i.e., for which $\neg D(x,y)$ holds.
\end{itemize}

The context $\theta$ already fixes the unary type of every element in $\Delta'$. Together with
the truth values of the auxiliary unary predicates $T,Q,Z_1,\dots,Z_m$ defined above, this
determines a complete configuration for the fixed sentence $\sentence_{\mathrm{aux}}$ over
$\domain'$. We denote this configuration by $\mathbf{n}^{\mathrm{aux}}_\theta$.

We emphasize that the concrete binary types already chosen in $\theta$ need not be encoded
again in the configuration: they have already been verified to be valid with respect to the
universal conjunct, and for the purpose of future extendability, what matters is only which
pairs are still available for choosing new witnesses and which witness obligations remain open.
These are precisely the roles of $D$ and the predicates $Z_k$.

We now justify the reduction used in Section~5.3.2.
\begin{proof}[Proof of Proposition~15]
We show that $\theta$ is extendable w.r.t.\ $(\sentence,\domain)$ if and only if the induced
configuration $\mathbf{n}^{\mathrm{aux}}_\theta$ is satisfiable for $\sentence_{\mathrm{aux}}$ over $\domain'$.

\medskip
($\Rightarrow$)
Assume that $\theta$ is extendable. Then there exists a model $\structure$ over $\domain$
such that  $\structure \models \sentence \wedge \theta$.
Restrict $\structure$ to the domain $\domain'$ and expand it by interpreting the auxiliary
predicates $T,Q,D,Z_k$ exactly as defined above. 
We claim that the resulting structure $\structure'$ satisfies $\sentence_{\mathrm{aux}}$.

First, \eqref{eq:app-aux0} holds because $\structure$ is a model of $\sentence$, and hence
satisfies the universal conjunct $\forall x \forall y : \phi(x,y)$.
Next, \eqref{eq:app-aux1} and \eqref{eq:app-aux2} hold by construction of $T,Q,D$:
the relation $D$ was defined precisely to mark those pairs in $\domain'$ whose binary types
have already been fixed by $\theta$.
Finally, consider any $k \in [m]$ and any element $e_t \in \domain'$ such that $Z_k(e_t)$ holds.
By definition of $Z_k$, the context $\theta$ does not yet guarantee a witness for
$\beta_k(e_t,y)$. 
Since $\structure \models \forall x \exists y : \beta_k(x,y)$, there must exist some $e_\ell \in \domain$ such that $\structure \models \beta_k(e_t,e_\ell)$. 
Moreover, if the pair $(e_t,e_\ell)$ had already been fixed by $\theta$ inside $\domain'$, then $e_t$ would already be $\beta_k$-satisfied in $\theta$, contradicting $Z_k(e_t)$. Therefore, when $e_\ell \in \domain'$, we must have $\neg D(e_t,e_\ell)$. 
This proves \eqref{eq:app-aux3}. 
Hence
$
\structure' \models \sentence_{\mathrm{aux}}.
$
By construction, the complete unary information of $\structure'$ over the expanded vocabulary
is exactly $\mathbf{n}^{\mathrm{aux}}_\theta$, so this configuration is satisfiable.

\medskip
($\Leftarrow$)
Conversely, assume that the configuration $\mathbf{n}^{\mathrm{aux}}_\theta$ is satisfiable for $\sentence_{\mathrm{aux}}$ over $\domain'$. 
Then there exists a model $\structure'$ of $\sentence_{\mathrm{aux}}$ over $\domain'$ realizing exactly this configuration.

Because $\mathbf{n}^{\mathrm{aux}}_\theta$ was induced from the current Stage II context,
the structure $\structure'$ agrees with $\theta$ on the unary types of all elements in
$\domain'$, on which element is marked by $T$, on which elements are marked by $Q$, and on
which elements are still missing witnesses for each existential conjunct. 
By \eqref{eq:app-aux1} and \eqref{eq:app-aux2}, the predicate $D$ in $\structure'$ marks exactly those pairs in $\domain'$ whose binary types have already been fixed by $\theta$.

Now keep all type choices that are already fixed in $\theta$. 
For the remaining unprocessed pairs, choose binary types so as to realize the existential witnesses guaranteed by \eqref{eq:app-aux3}. 
This is possible because \eqref{eq:app-aux3} only asks for witnesses on pairs outside $D$, i.e., on pairs whose binary types are still free to choose. 
Since all fixed choices in $\theta$ are valid and the newly chosen pairs satisfy the required $\beta_k$-atoms,
the resulting completion extends $\theta$ to a full model of $\sentence$ over $\domain$.
Therefore, $\theta$ is extendable w.r.t.\ $(\sentence,\domain)$.

\medskip
This equivalence is exactly the content of~\cite[Proposition~4]{meng2025model}, rewritten in
the notation of the present paper.
\end{proof}

\begin{remark}
Once the sentence $\sentence$ is fixed, the auxiliary sentence $\sentence_{\mathrm{aux}}$ is also fixed. 
Hence, by \cref{th:cfg-template}, the set of satisfiable bounded configurations of $\sentence_{\mathrm{aux}}$ can be precomputed once and for all. 
The extendability check in Stage II is therefore reduced to testing whether the induced configuration $\mathbf{n}^{\mathrm{aux}}_\theta$ can be derived from one of these finitely many template configurations. 
\end{remark}

\section{Proof of~\cref{prop:stage2-equivalence}}
\begin{proof}
  
The argument is similar to the proof of \cref{prop:stage1-equivalence}, where we also discuss the universal and existential constraints separately.
Let $f$ be an assignment satisfying $f\land \theta\models \sentence$.
We show that $f\land \theta' \models \sentence$.

Consider the universal constraint $\forall x\forall y:\phi(x,y)$.
Recall $\psi(x,y) := \phi(x,y)\land \phi(y,x)$.
Since all unary types in $\theta'$ are valid, we have $\psi(e_\ell, e_\ell)$ holds in $\theta'$ for every $\ell\in [n]$.
Let $e_\ell$ and $e_t$ be two elements in $\domain$.
If $(\ell,t) < (i,j)$, $\theta'$ already fixes the binary type for $(e_\ell,e_t)$ that has passed the extendability check, so $\psi(e_\ell,e_t)$ holds in $\theta'$ and thus holds in $f\land \theta'$ as well.
Otherwise, $(\ell,t) > (i,j)$ and the binary type for $(e_\ell,e_t)$ is not yet fixed in both $\theta$ and $\theta'$.
Let $\pi$ be the binary type for $(e_\ell,e_t)$ in $f$. 
By the second condition, we have $\pi\in \btypes(\tau_\ell',\tau_t')$, and thus $\tau_\ell'\wedge \tau_t'\wedge \pi(e_\ell,e_t) \models \psi(e_\ell,e_t)$. 
Putting all cases together, we have $f\land \theta' \models \forall x\forall y:\phi(x,y)$.

Next, consider the existential constraints $\bigwedge_{k\in[m]} \forall x\exists y:\beta_k(x,y)$.
For any $k\in[m]$, since $f\land\theta \models \forall x\exists y:\beta_k(x,y)$, we have that for any $\ell \in [n]$, there must be an element $e_t\in \domain$ such that $\beta_k(e_\ell,e_t)$ holds in $f\land \theta$. There are two cases:
\begin{itemize}
  \item If $\beta_k(e_\ell,e_t)$ holds in $\theta$, then $S_{\theta}(\ell,k)$ is $\mathrm{True}$, and so is $S_{\theta'}(\ell,k)$ by the equivalence $S_{\theta}=S_{\theta'}$. By the definition of satisfaction state, there must exist some $e_{t'}\in \domain$ such that $\beta_k(e_\ell,e_{t'})$ holds in $\theta'$, and thus holds in $f\land \theta'$ as well.
  \item Otherwise, $\beta_k(e_\ell,e_t)$ must hold in the assignment $f$, and thus also holds in $f\land \theta'$.
\end{itemize}
Therefore, $f\land \theta' \models \bigwedge_{k\in[m]} \forall x\exists y:\beta_k(x,y)$.
\end{proof}

\section{Proof of \cref{thm:optimized-compilation-complexity}}

\begin{proof}
We analyze the \cref{alg:two-stage-algorithm} equipped with the optimizations in~\cref{sec:optimizations}.
For each level, we maintain a hash table indexed by the abstraction of the context. 
Whenever a recursive call is generated, we compute its abstraction and check whether a subcircuit with the same abstraction has already been compiled at that level. 
If so, we reuse the existing subcircuit; otherwise, we construct a new one. 
By \cref{prop:stage1-equivalence,prop:stage2-equivalence}, equality of the corresponding abstractions is a sufficient condition for such reuse. 
Hence, a cache miss always creates a new subcircuit root at that level.

We say that a recursive call to Stage~I or Stage~II is \emph{expanded} if it is a cache miss and its body is actually executed, i.e., the call is not immediately answered by reusing an already compiled subcircuit. 
We first bound the cost of one expanded call, and then bound the total number of expanded calls by the size of the output circuit.

For Stage~I, the loop ranges over the valid unary types $\utypes$.
For Stage~II, the loop ranges over the valid binary types $\btypes(\tau_i,\tau_j)$ for the current pair. 
Since $\sentence$ is fixed, both $|\utypes|$ and $\max_{\tau,\tau'} |\btypes(\tau,\tau')|$ are constants depending only on $\sentence$, and each type contains only $O(1)$ literals. 
Hence each expanded call considers only constantly many candidate branches.

We now bound the \emph{exclusive} work performed by a single expanded call, i.e., the work done in the body of the call excluding the recursive work of newly expanded descendants.

For each candidate branch, the algorithm performs one $\textsc{Extendable}$ test. 
By~\cref{prop:extendability-stage-1}, the extendability test in Stage I can be carried out by comparing the current partial configuration with finitely many precomputed bounded templates; by~\cref{prop:substructure-extend}, the extendability test in Stage II reduces to satisfiability of a configuration for a fixed auxiliary sentence. 
In both cases, since $\sentence$ is fixed, each $\textsc{Extendable}$ test runs in $O(\log |\domain|)$ time.

The memoization overhead is $O(1)$ per candidate in Stage~I and $O(|\domain|^2)$ per candidate in Stage~II. 
Indeed, in Stage~I the abstraction only records the data appearing in~\cref{prop:stage1-equivalence}: the truth values of the diagonal existential atoms $\beta_k(x,x)$ and the binary type compatibility pattern with each unary type in $\utypes$. 
Since both $m$ and $|\utypes|$ depend only on $\sentence$, this abstraction has constant size. 
In Stage~II, the abstraction records:
\begin{enumerate}
    \item the satisfaction state $S_\theta$, which is an $|\domain| \times m$ Boolean matrix and hence has size $O(|\domain|)$, and
    \item the binary type compatibility on all unprocessed pairs, which involves $O(|\domain|^2)$ pairs and therefore has size $O(|\domain|^2)$.
\end{enumerate}
Consequently, computing, hashing, comparing, and inserting a Stage~II abstraction can all be done in $O(|\domain|^2)$ time. 
Local circuit construction at a branch adds only $O(1)$ work. 
Since each expanded call handles only constantly many candidate branches, the exclusive cost of one expanded call is $O(|\domain|^2)$, with the $O(\log |\domain|)$ cost of $\textsc{Extendable}$ absorbed into $O(|\domain|^2)$.

Next we bound the total number of expanded calls. 
Let $C$ be the output circuit and $M$ be the set of non-terminal expanded calls. 
Every $\kappa \in M$ creates a distinct $\lor$-node in $C$: indeed, if two calls had the same abstraction, the later one would be answered by reuse and would not be expanded. 
Moreover, the context of every call in $M$ is extendable. 
Therefore, at the current stage, at least one candidate branch must pass its $\textsc{Extendable}$ test; otherwise the current context would admit no completion, contradicting extendability. 
Hence the $\lor$-node created by $\kappa$ has at least one outgoing edge in $C$.

For each $\kappa \in M$, choose one fixed outgoing edge of its $\lor$-node. 
This defines an injection from $M$ into the edge set of $C$: distinct expanded calls create distinct $\lor$-nodes, and edges with different source gates are distinct edges. 
Therefore, $|M| \le |C|$.

It remains to sum the work. 
By the bound above, each non-terminal expanded call contributes $O(|\domain|^2)$ exclusive time, so the total time spent in all such calls is $O(|M|\,|\domain|^2) = O(|C|\,|\domain|^2)$.
The remaining cost comes from terminal and base cases.
A terminal Stage~II call only returns $\top$ and costs $O(1)$; the Stage~I base case $i=|\domain|+1$ only transfers control to Stage~II and likewise costs $O(1)$.
Such calls occur only along successful branches, together with at most the root call, and are therefore bounded by $O(|C|+1)$.
Hence the total running time is $ O\big((|C|+1)\,|\domain|^2\big)$.
\end{proof}

\section{Evaluation Details}

\subsection{Post-processing}

The basic compilation algorithm may introduce redundant internal logical nodes in the resulting circuit.
For example, for the circuit of $\sentence_E$ in~\cref{sec:experiments}, the Stage I output may contain nested $\wedge$-nodes of the form
$$
\wedge\bigl(\neg E(e_1,e_1) \wedge (\neg E(e_2,e_2) \wedge (\neg E(e_3,e_3) \wedge \dots))\bigr),
$$
which can be flattened without changing the represented Boolean function into
$$
\wedge\bigl(\neg E(e_1,e_1), \neg E(e_2,e_2), \neg E(e_3,e_3), \dots\bigr).
$$

We therefore apply a post-processing step that removes such redundant nodes; see \cref{alg:full-postprocess-simple}. 
Concretely, whenever a non-root $\wedge$-node or $\vee$-node has a unique parent of the same type, we merge the node into its parent. 
This transformation preserves logical equivalence as well as the d-DNNF property.
In the tables below, `ours*' denotes the variant of our compiler with this post-processing step. 
It typically yields smaller circuits, at the cost of some additional running time.

\begin{algorithm}[tb]
    \caption{Remove redundant logical nodes}
    \label{alg:full-postprocess-simple}

    \SetKwInOut{Input}{Input}
    \SetKwInOut{Output}{Output}
    \SetKwProg{Fn}{Function}{ is}{end}

    \Input{A circuit $C$ with root node $r$}
    \Output{An equivalent simplified circuit}

    \BlankLine
    Initialize queue $\mathcal{Q}\gets[r]$ and visited set $\mathcal{V}\gets \{r\}$\;
    \While{$\mathcal{Q}$ is not empty}{
        $u \gets$ pop from $\mathcal{Q}$\;
        push every child of $u$ that is not in $\mathcal{V}$ into $\mathcal{Q}$, and add it to $\mathcal{V}$\;
        \If{$u$ is a logical node with only one parent $p$, and $p$ is of the same type as $u$}{
            add all children of $u$ into children of $p$\;
            delete $u$ and the links involving $u$\;
        }
    }

        \Return $C$\;
\end{algorithm}

\subsection{Correctness validation}

To validate the correctness of our compiler, we performed both model enumeration and counting cross-checks against independent baselines.
For each input pair $(\sentence, \domain)$, we constructed (i) the d-DNNF circuit produced by our compiler, and (ii) a propositional CNF encoding of the grounding of $(\sentence, \domain)$.

For model enumeration, we compared the compiled circuit and the grounding in both directions. First, we enumerated models of the compiled d-DNNF circuit and checked that each enumerated assignment satisfies the grounding CNF. Second, on the SAT side, we used MiniSAT as a SAT oracle on the grounding CNF, repeatedly calling $\mathrm{solve()}$.
After each satisfying assignment $\structure$ was found, we added a blocking clause excluding $\structure$, and continued enumeration. 
In this way, we enumerated satisfying assignments of the grounding CNF and checked that each of them satisfies the compiled d-DNNF circuit. 
Since the number of models can become extremely large for larger domain sizes, exhaustive enumeration is often infeasible; therefore, on each side, we capped enumeration at $100{,}000$ models.

In addition to enumeration, we also validated correctness by comparing model counts obtained from three methods: (i) counting on the d-DNNF compiled by our method, (ii) counting on a d-DNNF compiled from the grounding by the compiler c2d~\cite{c2d}, and (iii) counting using the $\mathrm{WFOMC}$ algorithm for \fotwo{} sentences~\cite{fastWFOMC}.
All three methods returned exactly the same counts on all tested instances.

\subsection{Full evaluation results}

For randomized methods (c2d, bella-cd, and bella-ph), we repeated each experiment until $10$ successful runs had been collected, and report the average circuit size and runtime over those $10$ runs.
If a run timed out, we discarded it and continued until $10$ non-timeout runs were obtained.
For deterministic and stable methods (d4, bella-ka, and ours), we ran each experiment $3$ times and report the average over these runs.
All experiments were executed single-threadedly with a timeout of $1800$ seconds per run. 
For each instance, if the first three runs all timed out, the method was deemed unable to finish within the time limit and we terminated further attempts.

For the representative examples $\sentence_{RB}$, $\sentence_E$, $\sentence_{RBE}$, $\sentence_P$ and $\sentence_{D}$, we evaluated domain sizes from $n=4$ to $n=20$; the full results are reported in \cref{tab:res1-full,tab:res2-full,tab:res3-full,tab:res4-full,tab:res5-full}, respectively. 
For the \texttt{ui-bj} family and its variants, we evaluated domain sizes from $n=4$ to $n=10$; the full results are reported in \cref{tab:u2-b2,tab:u4-b2,tab:u6-b2,tab:u4-b1,tab:u4-b3,tab:u4-b2-ad,tab:u4-b2-so,tab:u4-b2-cc,tab:u4-b2-fd,tab:u4-b2-ao,tab:u4-b2-fo}.
In each table cell, the upper entry reports circuit size, the lower entry reports runtime, and `-' indicates that the first three runs all timed out.
The best results are shown in bold and the second-best results are underlined.

To isolate the effect of structural constraints on the binary predicates, we organize the \texttt{u4-b2} benchmark family into a common core and a set of additional structural constraints. 
The common core is:
\begin{align*}
\texttt{core} = &\forall x: \text{ExactlyOne}\big[ C_1(x), C_2(x), C_3(x), C_4(x)\big] \ \wedge \\
&\forall x\forall y: \big(\bigvee_{k\in[2]} E_k(x,y)\big) \rightarrow \bigwedge_{\ell\in[4]} \big(\neg C_\ell(x) \vee \neg C_\ell(y)\big) \ \wedge \\
&\bigwedge_{k\in[2]} \forall x\exists y: E_k(x,y).
\end{align*}
This core imposes no constraint on the relation between $E_k(x,y)$ and $E_k(y,x)$, for a fixed $k$, and it also imposes no interaction constraint between $E_1$ and $E_2$.
We then derive several variants by adding constraints along two orthogonal dimensions:
\begin{description}
\item[Within-predicate orientation pattern.] Each binary predicate $E_k$ may be 
\begin{itemize}
  \item symmetric: $E_k(x,y)\rightarrow E_k(y,x)$;
  \item asymmetric: $E_k(x,y)\rightarrow \neg E_k(y,x)$; or
  \item free, i.e., no symmetry or asymmetry constraint is imposed.
\end{itemize}
\item[Cross-predicate interaction pattern.] The two binary predicates $E_1$ and $E_2$ may be:
\begin{itemize}
  \item same-direction disjointness: $\neg E_1(x,y) \vee \neg E_2(x,y)$;
  \item bidirectional disjointness: 
  $$\big(E_1(x,y) \rightarrow (\neg E_2(x,y) \wedge \neg E_2(y,x))\big) \wedge \big(E_2(x,y) \rightarrow (\neg E_1(x,y) \wedge \neg E_1(y,x))\big);$$
  \item converse coupling: $\big(E_1(x,y) \rightarrow E_2(y,x)\big) \wedge \big(E_2(x,y) \rightarrow E_1(y,x)\big)$; or
  \item overlap allowed, i.e., no disjointness constraint is imposed.
\end{itemize}
\end{description}
Using these two dimensions, we define the following benchmark variants:
\begin{itemize}
  \item the symmetric-disjoint variant: both $E_1$ and $E_2$ are symmetric, and they cannot co-occur on the same ordered pair, which is the default setting in \texttt{ui-bj}.
  \begin{align*}
  \texttt{u4-b2} = 
  & \texttt{core} \wedge \forall x\forall y: \big(E_1(x,y) \rightarrow E_1(y,x)\big) \wedge \big(E_2(x,y) \rightarrow E_2(y,x)\big) \ \wedge \\
  &\forall x\forall y: \neg E_1(x,y) \vee \neg E_2(x,y).
  \end{align*}
  \item the asymmetric-disjoint variant: both predicates are asymmetric, and they cannot co-occur on the same ordered pair.
  \begin{align*}
  \texttt{u4-b2-ad} =
  & \texttt{core} \wedge \forall x\forall y: \big(E_1(x,y) \rightarrow \neg E_1(y,x)\big) \wedge \big(E_2(x,y) \rightarrow \neg E_2(y,x)\big) \ \wedge \\
  &\forall x\forall y: \neg E_1(x,y) \vee \neg E_2(x,y).
  \end{align*}
  \item the symmetric-overlap variant: both predicates are symmetric, but overlap between them is allowed.
  \begin{align*}
  \texttt{u4-b2-so} = 
  & \texttt{core} \wedge \forall x\forall y: \big(E_1(x,y) \rightarrow E_1(y,x)\big) \wedge \big(E_2(x,y) \rightarrow E_2(y,x)\big).
  \end{align*}
  \item the converse-coupled variant: an $E_1$-edge in one direction forces an $E_2$-edge in the opposite direction, and symmetrically for $E_2$.
  \begin{align*}
  \texttt{u4-b2-cc} = 
  & \texttt{core} \wedge \forall x\forall y: E_1(x,y) \rightarrow \big(E_2(y,x) \wedge \neg E_1(y,x)\big) \wedge \\ 
  &\forall x\forall y: E_2(x,y) \rightarrow \big(E_1(y,x) \wedge \neg E_2(y,x)\big).
  \end{align*} 
  \item the free-bidirectionally-disjoint variant: neither $E_1$ nor $E_2$ is required to be symmetric or asymmetric, but the two predicates are mutually exclusive across both directions of each unordered pair.
  \begin{align*}
  \texttt{u4-b2-fd} = 
  & \texttt{core} \wedge \forall x\forall y: E_1(x,y) \rightarrow \big(\neg E_2(x,y) \wedge \neg E_2(y,x)\big) \wedge \\
  &  \forall x\forall y: E_2(x,y) \rightarrow \big(\neg E_1(x,y) \wedge \neg E_1(y,x)\big).
  \end{align*}
  \item the asymmetric-overlap variant: both predicates are asymmetric, while overlap between $E_1$ and $E_2$ is allowed.
  \begin{align*}
  \texttt{u4-b2-ao} = 
  & \texttt{core} \wedge \forall x\forall y: \big(E_1(x,y) \rightarrow \neg E_1(y,x)\big) \wedge \big(E_2(x,y) \rightarrow \neg E_2(y,x)\big).
  \end{align*}
  \item the free-overlap variant, in which no additional structural constraint is imposed beyond the common core.
  \begin{align*}
  \texttt{u4-b2-fo} = 
  & \texttt{core}.
  \end{align*}
\end{itemize}

\begin{table}[t]
\centering
\small
\caption{The circuit sizes and compilation times of $\sentence_{RB}$. Interestingly, Bella failed on the $n=15$ instance (timeout) but succeeded on both smaller and larger domain sizes under this sentence. We therefore treat this as an anomalous behavior.}
% [inline block 0: 16 envs, 57635 chars in 15 pieces, piece 1 here, a bare % at each other -> data_tex | \begin{tabular}{c|ccccccc} \toprule...]

\label{tab:res1-full}
\end{table}

\begin{table}[t]
\centering
\small
\caption{The circuit sizes and compilation times of $\sentence_{E}$}
%
\label{tab:res2-full}
\end{table}

\begin{table}[t]
\centering
\small
\caption{The circuit sizes and compilation times of $\sentence_{RBE}$}
%
\label{tab:res3-full}
\end{table}

\begin{table}[t]
\centering
\small
\caption{The circuit sizes and compilation times of $\sentence_{P}$}
%
\label{tab:res4-full}
\end{table}

\begin{table}[t]
\centering
\small
\caption{The circuit sizes and compilation times of $\sentence_{D}$}
%
\label{tab:res5-full}
\end{table}

\begin{table}[t]
\centering
\small
\caption{The circuit sizes and compilation times of \texttt{u4-b2}}
%
\label{tab:u4-b2}
\end{table}

\begin{table}[t]
\centering
\small
\caption{The circuit sizes and compilation times of \texttt{u2-b2}}
%
\label{tab:u2-b2}
\end{table}

\begin{table}[t]
\centering
\small
\caption{The circuit sizes and compilation times of \texttt{u6-b2}}
%
\label{tab:u6-b2}
\end{table}

\begin{table}[t]
\centering
\small
\caption{The circuit sizes and compilation times of \texttt{u4-b1}}
%
\label{tab:u4-b3}
\end{table}

\begin{table}[t]
\centering
\small
\caption{The circuit sizes and compilation times of \texttt{u4-b2-ad}}
%
\label{tab:u4-b2-ad}
\end{table}

\begin{table}[t]
\centering
\small
\caption{The circuit sizes and compilation times of \texttt{u4-b2-cc}}
%
\label{tab:u4-b2-cc}
\end{table}

\begin{table}[t]
\centering
\small
\caption{The circuit sizes and compilation times of \texttt{u4-b2-fd}}
%
\label{tab:u4-b2-fd}
\end{table}

\begin{table}[t]
\centering
\small
\caption{The circuit sizes and compilation times of \texttt{u4-b2-so}}
%
\label{tab:u4-b2-so}
\end{table}

\begin{table}[t]
\centering
\small
\caption{The circuit sizes and compilation times of \texttt{u4-b2-ao}}
%
\label{tab:u4-b2-ao}
\end{table}

\begin{table}[t]
\centering
\small
\caption{The circuit sizes and compilation times of \texttt{u4-b2-fo}}
%
\label{tab:u4-b2-fo}
\end{table}

\subsection{Evaluation on OBDD}
\label{app:obdd_eval}

We further evaluate the OBDD representation induced by our compilation procedure.
Although our compiler is primarily designed for d-DNNF, the two-stage construction naturally imposes a type-based block order on the propositional variables: unary blocks are processed first, followed by binary blocks in lexicographic order.
As shown in~\cref{cor:OBDD}, for a fixed sentence, the resulting circuit can
be transformed into an OBDD respecting this order with only a constant-factor overhead. 
Thus we also evaluated the OBDD compilation over the same benchmarks.

We compare our approach with a generic grounding-based CNF-to-OBDD baseline implemented using \textsc{CUDD}, a widely used BDD package. 
For each sentence and domain size, as in the d-DNNF experiments, the \textsc{CUDD} baselines take as input the grounded propositional encoding of the corresponding \fotwo{} sentence.

Since OBDDs are canonical for a fixed boolean function and variable ordering, variable ordering is one of the dominant factors affecting both OBDD size and compilation time. 
We therefore evaluate two dynamic reordering strategies provided by \textsc{CUDD}.
The first baseline, denoted by \textsc{cudd-WIN3}, uses \texttt{CUDD\_REORDER\_WINDOW3}, a local window-based reordering heuristic, and is intended as a more runtime-oriented baseline.
The second baseline, denoted by \textsc{cudd-SIFT}, uses \texttt{CUDD\_REORDER\_SIFT}, the standard sifting heuristic, and is intended as a size-oriented baseline. 
All remaining experimental settings, including the hardware configuration, and the timeout of $1800$ seconds, are the same as in the d-DNNF evaluation. 
The results are summarized in Table~\ref{tab:obdd_res}.

Overall, the results show that the type block order used by our compiler is also effective for OBDD compilation. In particular, our method often scales to larger domain sizes and achieves substantially lower compilation times than the generic \textsc{CUDD} baselines. 
This advantage is consistent with the design of our compiler: instead of searching over propositional assignments directly, it branches on unary and binary types and uses efficient extendability checks to prune infeasible
contexts.

At the same time, the comparison also shows that our current ordering is not always optimal for OBDDs. 
The size-oriented \textsc{cudd-SIFT} baseline produces smaller OBDDs, albeit often with substantially higher compilation times.
This suggests that our block order captures useful first-order structure, but does not fully replace specialized BDD variable ordering optimization.
Designing OBDD specific order refinements that preserve the benefits of our method is therefore a natural direction for future work.

\begin{table}[t]
\centering
\small
\caption{
Summary of experimental results for OBDD compilation.
For each sentence, we report the maximum domain size $n$ solved within the time limit (1800\,s) by our compiler, \textsc{cudd-WIN3}, and \textsc{cudd-SIFT}, respectively.
For each \textsc{CUDD} baseline, we also report the OBDD-node-count ratio and compilation-time ratio relative to our compiler, measured at the largest domain size that both our compiler and the corresponding baseline solved.
}  
\begin{tabular}{c|c|cc}
\toprule
Sentence
& Max $n$
& \textsc{cudd-WIN3}
& \textsc{cudd-SIFT} \\

\midrule

$\sentence_{RB}$      & 20/18/15 & 0.571/40.987  & 0.876/473.459 \\
$\sentence_{E}$                & 23/20/16 & 0.790/3.756   & 0.691/71.469 \\
$\sentence_{RBE}$    & 13/13/11 & 0.999/9.221   & 0.431/159.125 \\
$\sentence_{P}$                   & 12/12/14 & 0.188/0.298   & 0.007/0.067 \\
$\sentence_{D}$                 & 15/8/11  & 8.075/599.273 & 0.606/130.086 \\

\midrule

\texttt{u2-b2}                  & 10/9/8   & 1.055/3.725   & 1.391/254.123 \\
\texttt{u4-b1}                  & 10/9/7   & 1.771/14.939  & 2.616/233.924 \\
\texttt{u4-b2}                  & 8/7/6    & 1.275/12.174  & 0.846/25.030 \\
\texttt{u4-b3}                  & 7/6/6    & 1.507/7.504   & 1.171/224.939 \\
\texttt{u6-b2}                  & 8/6/5    & 7.590/98.618  & 9.043/144.609 \\

\midrule

\texttt{u4-b2-ad}               & 8/7/6    & 1.486/23.870  & 0.450/9.726 \\
\texttt{u4-b2-ao}               & 8/6/6    & 1.774/43.535  & 0.365/48.075 \\
\texttt{u4-b2-cc}               & 8/7/6    & 1.424/17.239  & 0.833/25.690 \\
\texttt{u4-b2-fd}               & 8/7/6    & 1.563/12.291  & 0.979/18.328 \\
\texttt{u4-b2-so}               & 8/7/6    & 0.716/48.632  & 0.294/36.599 \\
\texttt{u4-b2-fo}               & 8/8/7    & 0.096/1.182   & 0.025/3.829 \\

\bottomrule
\end{tabular}
\label{tab:obdd_res}
\end{table}

\end{document}